\documentclass[reprint,superscriptaddress,nofootinbib, amsmath,amssymb, groupedaddress,aps,pra]{revtex4-2}

\usepackage{graphicx}% Include figure files
\usepackage{dcolumn}% Align table columns on decimal point
\usepackage{bm}% bold math
\usepackage{braket}

\usepackage{amsthm}
\usepackage{hyperref}% add hypertext capabilities
\hypersetup{
    colorlinks=true,
    linkcolor=blue,
    citecolor=red,
    filecolor=magenta,      
    urlcolor=blue,
}
\usepackage{cleveref}% Enhanced referencing

\newtheorem{definition}{Definition}
\newtheorem{theorem}{Theorem}

\newtheorem{problem}{Problem}

\newtheorem{model}{Model}

\makeatletter
\def\l@subsubsection#1#2{}
\makeatother

\begin{document}

\title{Spectral methods: crucial for machine learning, natural for quantum computers?}
\author{Vasilis Belis}
\author{Joseph Bowles}
\author{Rishabh Gupta}
\author{Evan Peters}
\author{Maria Schuld}\thanks{In alphabetical order}
\affiliation{Xanadu Quantum Technologies Inc., Toronto, Ontario, M5G 2C8, Canada}

\date{\today} 

\begin{abstract}
This article presents an argument for why quantum computers could unlock new methods for machine learning. We argue that spectral methods, in particular those that learn, regularise, or otherwise manipulate the Fourier spectrum of a machine learning model, are often natural for quantum computers. For example, if a generative machine learning model is represented by a quantum state, the Quantum Fourier Transform allows us to manipulate the Fourier spectrum of the state using the entire toolbox of quantum routines, an operation that is usually prohibitive for classical models. At the same time, spectral methods are surprisingly fundamental to machine learning: A spectral bias has recently been hypothesised to be the core principle behind the success of deep learning; support vector machines have been known for decades to regularise in Fourier space, and convolutional neural nets build filters in the Fourier space of images. Could, then, quantum computing open fundamentally different, much more direct and resource-efficient ways to design the spectral properties of a model? We discuss this potential in detail here, hoping to stimulate a direction in quantum machine learning research that puts the question of ``why quantum?'' first.   
\end{abstract}

\maketitle

\section{Introduction}

In the search for real-world applications for quantum computers, there is a growing consensus that cryptoanalysis and quantum simulation are the most mature proposals at this stage, and finding others ``has proven remarkably difficult'' \cite{babbush2025grand}. Both are also somewhat exceptional: cryptography is arguably the most structured of all real-world problems, and quantum chemistry shares its very theoretical foundations with quantum computers. In this perspective we propose a motivation for an application area where a clear-cut bridge has been elusive so far \cite{schuld2022quantum}: machine learning. 

Besides attracting excessive attention in academic and industrial research, quantum machine learning \cite{cerezo2022challenges, schuld2021machine, biamonte2017quantum} has struggled to make a clear case for why quantum computing should be of fundamental interest for generalising from data. For example, speeding up linear algebra \cite{rebentrost2014quantum, kerenidis2016quantum, wiebe2012quantum, lloyd2014quantum} fails to convincingly account for the time it takes to load big matrices into a realistic quantum computer; ``Quantum Neural Networks'' \cite{farhi2018classification, schuld2020circuit} still need to prove that they deserve their spot in the gallery of scalable, performant models \cite{cerezo2025does, bowles2024better, recio2025train}, and reverting to the analysis of ``quantum data'' \cite{huang2025generative, huang2022quantum} needs to make a case for why leaving states unmeasured is critical to learn their properties in practically relevant situations. Why, then, should we believe that quantum computers are useful for machine learning?

\begin{figure}[t]
    \centering
    \includegraphics[width=0.9\linewidth]{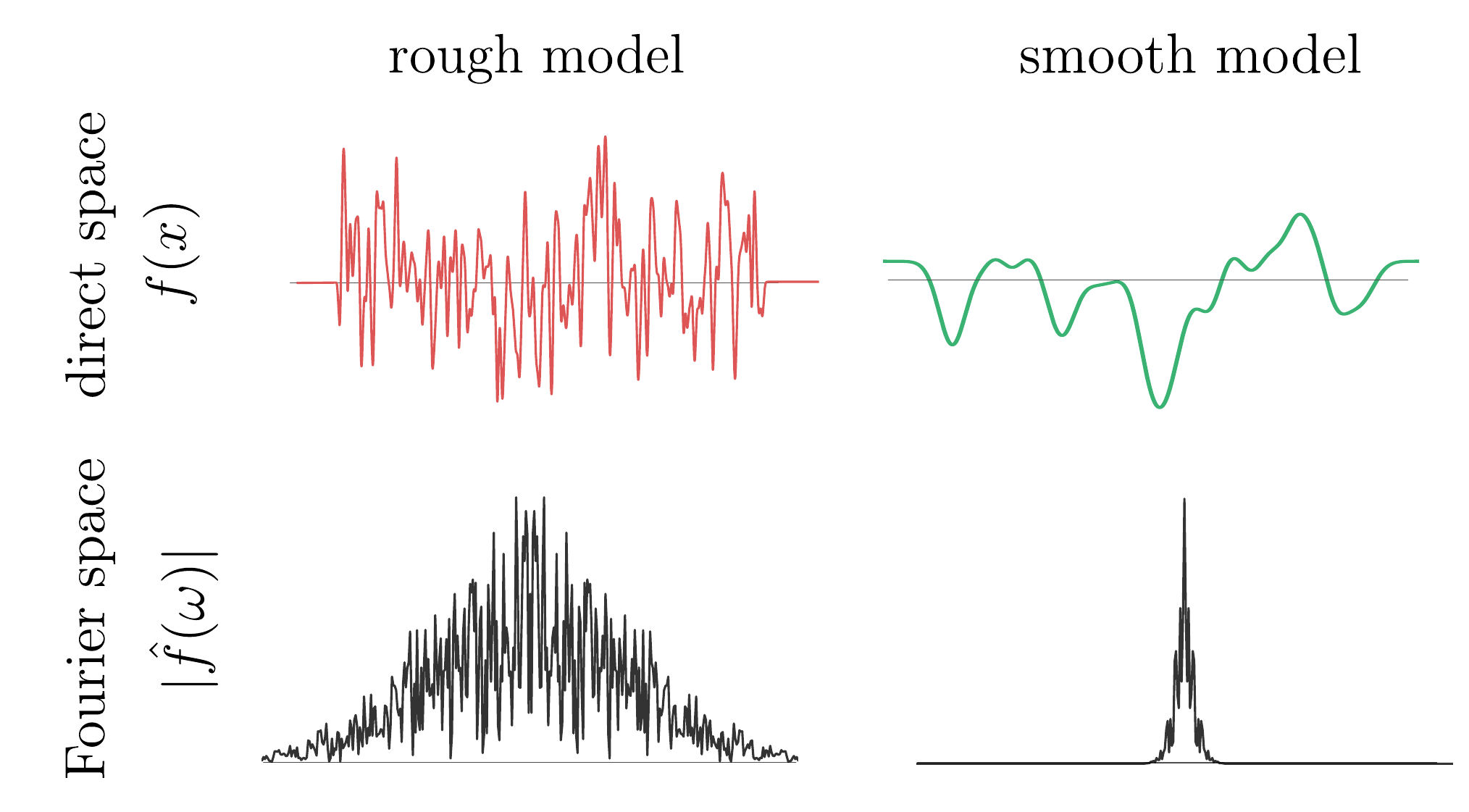}
    \caption{A common simplicity bias of machine learning models is their smoothness, which is linked to a decay of the model function's Fourier spectrum. But designing such a decay in Fourier space is usually computationally expensive. Can quantum computers help here?   }
    \label{fig:hero}
\end{figure}

Here we discuss a potential answer to this question. It rests on the observation that ``spectral methods''---which we define loosely as the design of machine learning models with desirable properties in Fourier space---are fundamental to a core principle of learning \cite{wilson2025deep}: to find simple models in expressive model classes. At the same time, quantum algorithms often rely on, or can be understood by, Fourier analysis. The main point of this paper is to explain why this connection could be a fruitful starting point for the design of quantum machine learning algorithms.

As an illustrative example for the importance of spectral methods in machine learning, consider a probability distribution $p(x)$ that tells us how likely it is to sample a data point $x\in \mathbb{R}^N$. Simple models intuitively correspond to \textit{smooth} distributions $p(x)$, where we use ``smooth'' broadly to mean Lipschitz and infinitely differentiable functions. Such models are robust to input perturbations: when we change the input a little, the probability $p(x + \delta)$ should not change a lot. Furthermore, smooth models have special properties in Fourier space: their spectrum decays super-polynomially (see Figure \ref{fig:hero}). This can be intuitively seen by considering the Fourier decomposition of $p(x)$,
\begin{align}\label{eq:p_fourier}
    p(x) = \frac{1}{\sqrt{2 \pi}}\int\limits_{\Omega}  \;\; \hat{p}(\omega) e^{-2 \pi i x \omega } d \omega,
\end{align}
where $\hat{p}(\omega)$ are the Fourier coefficients and  $\omega \in \Omega \subset \mathbb{R}^N$ the \textit{frequencies}. Functions that are smooth will be composed of weakly oscillating basis functions $e^{-2 \pi i x \omega }$, and therefore have small coefficients $\hat{p}(\omega)$ for large frequencies $|\omega|$ in Eq.~\ref{eq:p_fourier}. 

There has been growing recognition in the machine learning community that smoothness is crucial for a model to learn and generalise from data. To this end, an abundance of techniques have been developed for imposing smoothness \cite{JMLR:v15:srivastava14a,noh2017regularizing,miyato2018spectral,dherin2022why}. At the same time, these heuristic techniques are indirect and are often computationally inefficient \cite{pmlr-v137-rosca20a}, and so imposing model smoothness has remained out of reach of these classical methods. As shown above, the important concept of a simplicity bias in learning translates to a clearly defined behaviour in Fourier space that can help design models, an idea with a powerful group-theoretic generalisation that we will deal with in depth here \cite{diaconis1988group, kondor2008group}.  

Given that regularisation, or biasing a machine learning method towards simple models, is one of the most fundamental themes in machine learning, it is curious that spectral methods enforcing smoothness in Fourier space are not more prominent in the canonical knowledge base of researchers (and, by extension, scarce in the literature of quantum machine learning). A reason might be the computational challenge of working in the Fourier space of a large model, which is often only accessible indirectly via the convolution theorem. This theorem states that changing the Fourier coefficients $\hat{f}$ of a model by multiplying it with a filter $\hat{g}$ in Fourier space corresponds to a convolution in direct space:
\begin{align}
    (f * g)(x) = \int_X f(y) g(x-y) dy = \mathcal{F}^{-1}(\hat{f} \cdot \hat{g}).
\end{align}
In machine learning, the function $g$---whose Fourier coefficients $\hat{g}$ make up the filter---is usually known as a (stationary) \textit{kernel}. As a consequence, a wide range of kernel methods, which were the workhorse of machine learning in the 1990's and early 2000's \cite{kernel, steinwart2008support} and are widely used for small-to-medium data problems, can be thought of as spectral regularisation methods. The same is true for kernel tools like the Maximum Mean Discrepancy employed to train implicit generative models \cite{gretton2012kernel, li2017mmd}. Likewise, convolutional layers of neural networks, and their group-theoretic generalisations \cite{bronstein2021geometric, kondor2018generalization} such as graph neural networks \cite{wu2020comprehensive}, design a model class by shaping the  Fourier spectrum with a filter (although this filter acts on the image rather than the model function, which is computationally much easier). Last but not least, the convolution theorem helps to empirically probe how deep learning works: The \textit{spectral bias} hypothesis \cite{rahaman2019spectral, xu2019frequency} states that the complex model class of vastly overparametrised neural networks learns to match the ground truth's frequency from smallest (i.e., the ``smoothest'' component) to largest (the most oscillating component). All these observations point towards the fact that the Fourier spectrum of a model is a crucial mathematical object to study and design good machine learning models, but that we have to rely on indirect access to Fourier space due to computational challenges. 

Let us now consider a machine learning model that is represented by a trainable \cite{mitarai2018quantum, schuld2019evaluating} quantum state $\ket{\psi_{\theta}}$. How can quantum computers possibly help to access or shape the model's Fourier spectrum more directly? As we will see, there are many possible answers to this question, which intimately depend on how the model is defined, and how its spectrum is supposed to be designed. For example, if we consider the probability of a measurement outcome, $p(x) = |\langle x |\psi_{\theta}\rangle|^2$, as an implicit generative model \cite{liu2018differentiable, rudolph2024trainability} we can use the \textit{Quantum Fourier Transform} (QFT) \cite{moore2006generic} to apply a Fourier transform to the amplitudes of the quantum state, 
\begin{align}
    \sum_x \psi_{\theta}(x) \ket{x} \to \sum_{\omega} \hat{\psi}_{\theta}(\omega) \ket{\omega}.
\end{align}
While the QFT is usually associated with the discrete Fourier transform of the amplitudes as a function on $\mathbb{Z}_N$, this process is known to be efficient in the number of amplitudes for many groups\footnote{The discrete Fourier transform on the amplitudes is performed by the ``standard'' Fourier transform, while a layer of Hadamard gates implements the Walsh or ``boolean'' Fourier transform, and a rather complicated circuit is known to exist for a Fourier transform over the symmetric group.} which implies an exponential---or sometimes even super-exponential---speedup compared to the classical Fast Fourier Transform. Once in the Fourier basis, we can use the entire arsenal of quantum algorithms to manipulate the Fourier spectrum, for example to impose (possibly learnable) biases towards lower-order coefficients. 

Another example of how quantum machine learning relates to spectral methods is given by a prominent class of supervised quantum machine learning models known as ``Quantum Neural Networks'' \cite{farhi2018classification, schuld2020circuit}. These models encode an input vector $x\in \mathbb{R}^N$ into a trainable state $\ket{\psi_{\theta}(x)}$, and define a model as an observable expectation $f(x) = \bra{\psi_{\theta}(x)} O \ket{\psi_{\theta}(x)} $. The standard data-embedding strategy encodes the entries $x_i$ of $x$ via gates of the form $\exp{(i x_i H)}$, where $H$ is a Hermitian operator. The resemblance to Fourier basis functions leads to feature maps that have elegant interpretations via Fourier analysis, and induce different kinds of spectral biases into Quantum Neural Networks \cite{schuld2021effect, lu2026unified, duffy2025spectral, xu2025spectral}, as well as to dequantisation methods \cite{sweke2025kernel, sweke2025potential}. Here, an \textit{embedding strategy}, rather than a QFT, designs the Fourier spectrum of a quantum model.

While not a main focus of this paper, we want to remark that we can also use spectral methods more indirectly, namely to access more sophisticated algebraic properties of the model which relate to its simplicity. A possible example is a recent variation of the famous quantum algorithms solving Hidden Subgroup Problems \cite{Childs_2010} such as Shor's algorithm. Given several copies of a quantum state---which may represent a generative model---we can probe the (un)entanglement structure of the state \cite{bouland2024state, simidzija2026approximate}, which opens up methods of influencing the simplicity of a model in terms of its entanglement. 

We believe that these examples are just the tip of a large iceberg of spectral learning methods unlocked by quantum computers that wait to be uncovered. Our hope is to stimulate more work in this field, and to find satisfying answers to why quantum theory should fundamentally help learning from data---be it classical or quantum. Important questions are, for example: 
\begin{itemize}
    \item Can quantum models design a soft spectral bias without reverting to vast scales used in deep learning?
    \item Can they extend the success of geometric deep learning to implement useful biases, but on intractably large model spaces (rather than on a tractable image)?
    \item Can group structure help to achieve tailor-made regularisation for data domains that standard methods still struggle with, such as graphs, spheres and permutations?
    \item  What limitations arise from the fact that we can only apply Fourier transforms to amplitudes, but not to probabilities directly? What can we do with the sampling access we have to quantum information?
    \item  What are the limitations of classical kernel methods, and can we go beyond them with spectral methods in quantum machine learning?
\end{itemize}  

While these questions are wide open, the goal of this article (which we understand as a mix of perspective, tutorial and review) is to collect the technical tools necessary to start research in the spectral approach to quantum machine learning. 

We will start with a motivating example in Section \ref{sec:toy_example} that intends to give a concrete intuition for the potential and pitfalls of spectral methods in quantum machine learning. Section~\ref{sec:spectral_methods} then lays the mathematical foundations of group Fourier transforms. Section~\ref{sec:critical_ml} gives an overview of spectral methods in classical machine learning, and, for three examples, shows why the Fourier spectrum of a model relates to its simplicity. We define the important tool of quantum Fourier transforms in Section~\ref{sec:natural_quantum} and look a little deeper into Fourier spectra of states that encode quantum models.  Finally, Section~\ref{sec:spectral_qml} gives an overview of existing lines of research that make use of Fourier methods in quantum machine learning. We conclude by commenting on the potential of this approach.

\section{A motivating example} \label{sec:toy_example}

To build intuition, we want to start with a simple toy example of how a generative quantum model can be designed in Fourier space using the QFT. We will work with binary-valued data and a simple non-parametric model---a model that is directly constructed from the training data with no tunable parameters---to generate samples that are similar to the training examples. The idea is to start with the empirical distribution that only has support in the training data, and apply a low-pass filter in Fourier space to ``smoothen'' this distribution and hence remove finite-sample effects. This principle can be implemented with quantum and classical methods, and allows us to illustrate themes that are encountered repeatedly when working with spectral methods in quantum machine learning:
\begin{itemize}
    \item \textit{Desirable model properties are often explicit in Fourier space.} Here we will use the fact that smooth models have a decaying Fourier spectrum, which was already discussed in the introduction.
    \item \textit{We need to capture the data structure by using an appropriate group for the Fourier transform.} In this example we use the group $\mathbb{Z}_2 \times \mathbb{Z}_2 \times \dots = \mathbb{Z}_2^n$, the set $\{0,1\}^n$ with addition modulo $2$. This captures the properties of binary random variables, and is closely related with quantum computation.
    \item \textit{Quantum algorithms can sometimes act directly in Fourier space.} Quantum machine learning models can often move into the Fourier basis, and hence manipulate the Fourier spectrum of the amplitudes of a quantum model using tools from the quantum algorithms toolbox. These tools also define the fundamental limitations on what we can do.
    \item \textit{Quantum methods manipulate the Fourier coefficients of amplitudes, not those of the measured information.} This property, an implication of the Born rule, can be a bug or feature of quantum model design in Fourier space. Here we will see how this property amplifies the model's desired decay in Fourier space, but mismatches lower-order Fourier coefficients.
    \item \textit{The computational cost of quantum and classical algorithms based on Fourier transforms are subtle.} We will show that, while in general classically intractable, a naive spectral design approach can be classically implemented by a kernel method. 
\end{itemize}

%In Section~\ref{sec:spectral_qml} we will see how a variational approach turns the simple model into a potentially quite powerful generative modeling approach \cite{joseph} [TODO: add reference to Joseph].

\subsection{Smoothness for models over binary data}

Let us start by defining a generative learning problem:
\begin{problem}\textbf{Learning to generate bitstrings.}\label{prob:bitstring_learning}
Given a training set of bitstrings $x \in \{0,1\}^n$ sampled from an unknown distribution $p(x)$, generate new bitstrings from that distribution. 
\end{problem}

This learning problem is unsolvable without further assumptions, as lots of distributions have a high probability of generating the training set---how can we guess which was the right one? As discussed in the introduction, an implicit assumption of many machine learning models for real-life problems is that the ground truth is \textit{smooth}. Informally, smoothness means that data which is ``close'' to high-probability samples also have a high probability. Of course, closeness is not always a clear term when dealing with discrete data. We will therefore motivate the \textit{expected parities of subsets of bits} as a measure of smoothness for binary data, which will turn out to be Fourier coefficients of the model $p(x)$, at least if we use the appropriate Fourier transform for the group $\mathbb{Z}_2^n$, the \textit{Walsh-Hadamard transform}.

Intuitively, a probability distribution $p: \{0, 1\}^n \to [0,1]$, is considered smooth if its value does not change drastically when a small number of bits are flipped. We can link this to a property of the \textit{parity functions} 
\begin{align}\label{eq:parity_func}
    \chi_{k}(x) = (-1)^{x\cdot k},
\end{align}
where $k \in \{0, 1\}^n$ is another bitstring, and  $x \cdot k$ is the dot product modulo $2$, e.g., $101 \cdot 111 = (1 \cdot 1 + 0 \cdot 1 + 1 \cdot 1) \mod 2 = 0$. In words, the 1-entries of a given $k$ act to select bits in $x$, and the parity function returns $1$ if there are an even number of 1s among these bits, and $-1$ if there are an odd number of bits. A $k$ is large (and the corresponding parity function is of \textit{high order}) if it contains many $1$s, and hence checks the parity of a large number of bits in $x$. The number of $1$s is called the \textit{Hamming weight} of $k$. For a given $k$ we can define the (normalised) \textit{expected parity} of a distribution as
\begin{align}\label{eq:ft_z2n}
    \mathbb{E}_{p}\left[ \chi_{k}(x) \right] = \frac{1}{\sqrt{2^n}} \sum_{x \in \{0,1\}^n} p(x) (-1)^{x\cdot k}.
\end{align}
The expected parities allow us to make the idea of smoothness of probabilistic models over binary data precise: Smooth bitstring distributions have expected parities that decay for higher orders. This property is easy to intuitively understand. Take, for example, the highest-order $k = 1...1$: the parity $\chi_{1..1}(x)$ is extremely sensitive to noise in $x$, as flipping only one bit randomly will always change the parity. A distribution that has a large value $\mathbb{E}_{p}\left[ \chi_{1..1}(x) \right]$ needs to have very different probabilities for $x$ that are just one bitflip away---it is not very robust.

As mentioned, it turns out that the expected parities $\mathbb{E}_{p}\left[ \chi_{k}(x) \right]$ are the \textit{Fourier coefficients} of $p(x)$ when using the natural Fourier transform over the boolean cube (known as the \textit{Walsh(-Hadamard) transform}). As a consequence, we can assume from now on that smooth bitstring distributions have a decaying Fourier spectrum.

To generalise from training data, a good strategy is to work with a model that has a bias towards smoothness. This is what we will construct next.

\subsection{Smoothing the Fourier spectrum of the data}\label{sec:generative_bandlimiting_classical}

Most generative models in machine learning, including most diffusion models, energy-based models, flows, and variational autoencoders, define a model class, take a random initial model, and train it to maximize the likelihood of seeing the data. We could follow this strategy, and construct a class of functions that has a decaying Fourier spectrum (in fact, this is what a support vector machine does for supervised learning \cite{steinwart2008support}). But we could also do something more direct (see Figure \ref{fig:smoothing_principle})---something that could be elegantly implemented in a quantum model: impose a bias directly to the Fourier coefficients 
\begin{align}\label{eq:emp_fourier}
    \mathbb{E}_{p_{\mathcal{X}}}\left[ \chi_{k}(x) \right] = \frac{1}{\sqrt{2^n}|\mathcal{X}|} \sum_{x \in \mathcal{X}} (-1)^{x\cdot k}
\end{align} 
of the empirical distribution $p_{\mathcal{X}}$ that only has support on the training data $\mathcal{X}\subseteq \{0,1\}^n$. These ``empirical'' Fourier coefficients can be understood as a noisy version of the true ones. Since the empirical distribution is sparse, its Fourier spectrum is dense, and therefore has large higher-order coefficients. We know these higher-order coefficients should be zero for a smooth distribution, and so we can therefore consider them to be artifacts of a finite data set. In particular, since $(-1)^{k\cdot x}$ can be seen as a Bernoulli random variable when $x$ is sampled from $p(x)$, the coefficients \eqref{eq:emp_fourier} are unbiased estimates of the true coefficients  \eqref{eq:ft_z2n} with an absolute error that scales as $1/\sqrt{2^n\vert \mathcal{X} \vert}$. Since higher-order coefficients are effectively zero for the true distribution, this means that the relative error of the empirical Fourier coefficients is much larger at higher order. 

The spectral bias should therefore suppress higher-order coefficients, to ensure a smooth distribution and correct for these finite-data artifacts. On the other hand, since the lower order coefficients are not zero and we do not know their values a priori, it should not change them much beyond the $1/\sqrt{2^n\vert \mathcal{X} \vert}$ estimation error. For example, if all training points start with a 1 bit, it is reasonable to assume that the expected parity of the ground truth for $k=10000$ is close to 1, which means that new samples also have a 1 bit in this position. The idea is hence to ``denoise'' the data by applying something that would be called a \textit{low-pass filter} in signal processing. Note that while we will do this with a ``hard'' strategy in this didactic example, ultimately such a bias will have to be learnt to build performant models (which is why we suggestively denote the final bandlimited model by a subscript $\theta$).

\begin{figure}
    \centering
    \includegraphics[width=\linewidth]{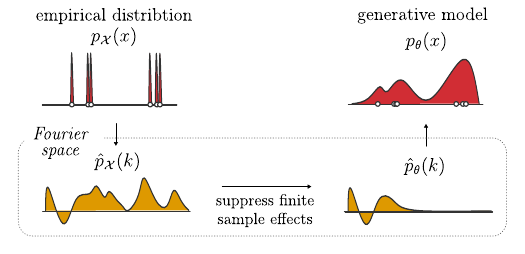}
    \caption{A sketch of the idea of smoothing an empirical distribution of training samples in Fourier space. The empirical distribution is sparse, with support only on the training data. Therefore, its Fourier spectrum is dense and has support on high-order frequencies which can be seen as a consequence of finite data effects. By applying a low-pass filter in Fourier space we impose smoothness on the resulting distribution.}
    \label{fig:smoothing_principle}
\end{figure}

The considerations so far suggest the following spectral strategy to solve the generative learning task:

\begin{model}{\textbf{Empirical smoothing}}
We are given a set $\mathcal{X}$ of data samples and want to solve Problem \ref{prob:bitstring_learning}. 
\begin{enumerate}
\item Start with the empirical distribution
\begin{align}
p_{\mathcal{X}}(x) =  \frac{1}{|\mathcal{X}|} \sum_{x \in \{0,1\}^n} \mathbb{I}[x \in \mathcal{X}],
\end{align}
where $\mathbb{I}$ is the indicator function that is one if $x$ is a training sample, and zero elsewhere.
\item Move into the $\mathbb{Z}_2^n$ Fourier space using the Walsh transform
\begin{equation}
\hat{p}_{\mathcal{X}}(k) = \frac{1}{\sqrt{2^n}}\sum_{x \in \mathcal{X}} \frac{1}{|\mathcal{X}|} (-1)^{k \cdot x}.
\end{equation}
\item Apply a (possibly learnable) filter in Fourier space that suppresses higher-order Fourier coefficients, but keeps the lower-order Fourier coefficients approximately intact. 
\item Move back to direct space.
\item Sample from the model.
\end{enumerate}
\end{model}
For example, we could apply a decay that depends on the Hamming weight of the Fourier coefficient, which is implemented in Figure~\ref{fig:smoothing_classical}.

\begin{figure*}
    \centering
    \includegraphics[width=0.9\linewidth]{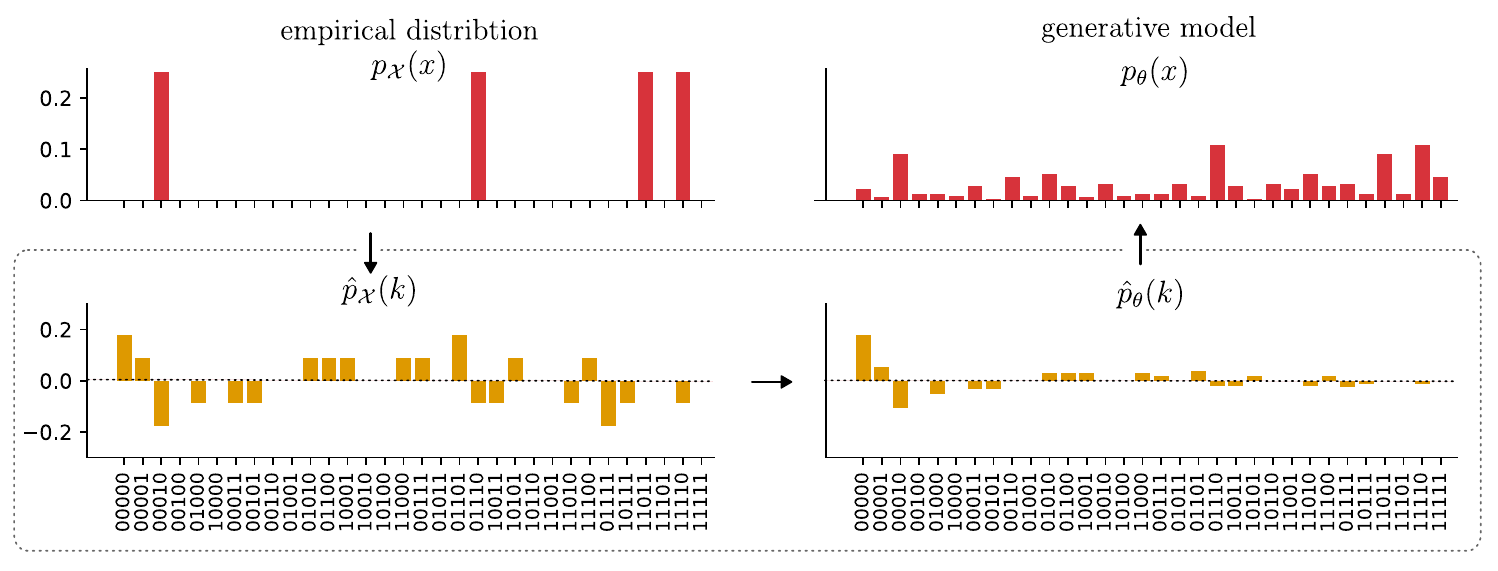}
    \caption{An example of smoothing an empirical distribution in Fourier space to solve the generative learning problem. Starting from an empirical distribution $p_{\mathcal{X}}$ of four binary data samples (top left), we compute the (Walsh) Fourier spectrum (bottom left). We then apply a filter of the form $\hat{p}_{\theta}(k) = (1-2\theta)^{|k|} \hat{p}_{\mathcal{X}}(k)$, where $\theta \in [0,1]$ controls the spectral decay rate. Moving back into direct space, we get a model that generalised from the training data by making the empirical distribution smoother. }
    \label{fig:smoothing_classical}
\end{figure*}

When we move back into direct space, we get a different (and typically dense) probability distribution---we have generalised from the training data (see Figure~\ref{fig:smoothing_classical}). In fact, since for boolean functions the Fourier decomposition is a kind of polynomial decomposition, what we just did is a (hard, and not learnable) order-based regularisation strategy as suggested in \cite{wilson2025deep}. 

% Quite clearly, the power of the model lies in the ability to apply good filters in Step 3. According to Bochner's theorem, such a transformation has to fulfill certain mathematical properties to make sure that the final model is a probability distribution: 

% \begin{theorem}{\textbf{Properties of the filter}}
% Let $p(x)$ be a probability distribution over bitstrings $x \in \{0,1\}^n$, and $\hat{p}(k)$ its (Walsh) Fourier coefficients. We transform each coefficient by applying a filter $\to \hat{p}'(k) = \hat{g}(k)  \hat{p}(k)$. The resulting $\hat{p}'$ form the Fourier spectrum of a probability distribution iff:
% \begin{enumerate}
%     \item \emph{Normalisation:} $\hat{g}$ does not change the zero frequency, or $\hat{p}(0..0) = \hat{p}'(0..0) =1$.
%     \item \emph{Positivity:} The new spectrum $\hat{p}'(k)$ is positive definite.
% \end{enumerate}
% If $\hat{g}(k)$ only depends on $k$, the second requirement means that it has to be positive definite.
% \end{theorem}

\subsection{Implementation with quantum models}

The strategy sketched above can be easily translated into a quantum algorithm:
\begin{model}{\textbf{Empirical smoothing with quantum models}}
We are again given a set $\mathcal{X}$ of data samples and want to solve Problem \ref{prob:bitstring_learning}, but this time we use a quantum state as a generative model. 
\begin{enumerate}
    \item Start with a superposition of the training data, 
    \begin{equation}
        \ket{\psi_{\mathcal{X}}} = \frac{1}{\sqrt{|\mathcal{X}|}} \sum_{x \in \mathcal{X}} \ket{x}
    \end{equation}
    \item Apply the $\mathbb{Z}_2^n$ Fourier transform to this state, which is implemented by applying a Hadamard gate onto each qubit,
    \begin{equation}
        \ket{\hat{\psi}_{\mathcal{X}}} = \frac{1}{\sqrt{2^n |\mathcal{X}|}} \sum_{k} \sum_{x \in \mathcal{X}}  (-1)^{x\cdot k} \ket{k}.
    \end{equation}
    \item Apply a (potentially non-unitary) quantum transformation that modifies the amplitudes of the quantum state in Fourier space. 
    % \item Apply a quantum algorithm (unitary or non-unitary operation) to shift weight from amplitudes of high-order coefficients $\ket{k}$ to lower-order ones, while keeping the relative distribution of the lower-order ones as intact as possible. 
    \item Move back into direct space by applying an inverse Fourier transform, another layer of Hadamards.
    \item Measure in the computational basis to generate a sample from the model.
\end{enumerate}
\end{model}

It is important to note that manipulating the Fourier spectrum of the \textit{amplitudes} $\psi(x)$ (Step 3) of the quantum state is an indirect way to manipulate those of the \textit{measurement distribution} (i.e., the final probabilistic model), which is given by the Born rule, 
\begin{align}
    p(x) = |\psi(x)|^2.
\end{align}
% In fact, the convolution theorem implies that the two are related via auto-correlation:
% %
% \begin{align}
%     \hat{p}(k) = \frac{1}{\sqrt{2^n}}\sum_x \hat{\psi}(x)\hat{\psi}^*(x \oplus k)
% \end{align}
The Fourier coefficients of $\psi(x)$ and $p(x)$ are related by an autocorrelation, as we will see in Section \ref{sec:qft_generative_qml}. Comparing Figures~\ref{fig:smoothing_classical} and \ref{fig:smoothing_amplitudes} shows that the spectral manipulation of amplitudes leads to an attenuated profile of the final model compared to the same manipulation being applied to probabilities. Furthermore, unitary transforms on the amplitudes can lead to \textit{non-linear} transformations of the distribution's Fourier spectrum. Depending on the application, working with amplitudes can therefore be a bug or feature for model design.

Of course, quantum computing imposes fundamental constraints on how to manipulate the Fourier spectrum of the state in Step 3. For example, we can naively implement the noise filter used in Figure~\ref{fig:smoothing_classical} on the amplitudes as shown in Figure~\ref{fig:smoothing_amplitudes} by using a quantum algorithm that conditionally rotates an ancilla state by an angle of $2 \arcsin(1-2\theta)$ when a $1$ is encountered in  $\ket{k}$. Post-selecting on the ancilla, we end up with a state that re-weighted the amplitudes in Fourier space according to the noise filter. A fundamental limitation of this simple approach is the success probability of post-selection, which prohibits arbitrarily strong bandlimiting at large scales. More sophisticated techniques \cite{guo2024nonlinear, rattew2023non} may be able to push the boundaries, but always incur a price for highly non-unitary operations. 

A more successful approach may involve opting for a unitary layer that---rather than attempting to bandlimit the amplitudes of the quantum state probabilistically---applies a deterministic phase mask to the amplitudes so that the resulting interference from  autocorrelation results in an effective low-pass filter on the spectrum of the model's probability distribution. 

Ultimately, a flexible model also requires the transformation to be learnt, which suggests the use of scalable variational circuits that come with an additional set of challenges.\footnote{In upcoming work we study a learnable diagonal phase mask implemented by trainable IQP circuits \cite{recio2025train} as a suitable ansatz to build filters.} In summary, quantum computers can efficiently manipulate the Fourier spectrum of the amplitudes of a quantum model state, but if we can design good Fourier filters within the constraints of quantum algorithms will be a crucial question for spectral methods in quantum machine learning.

\begin{figure*}
    \centering
    \includegraphics[width=0.8\linewidth]{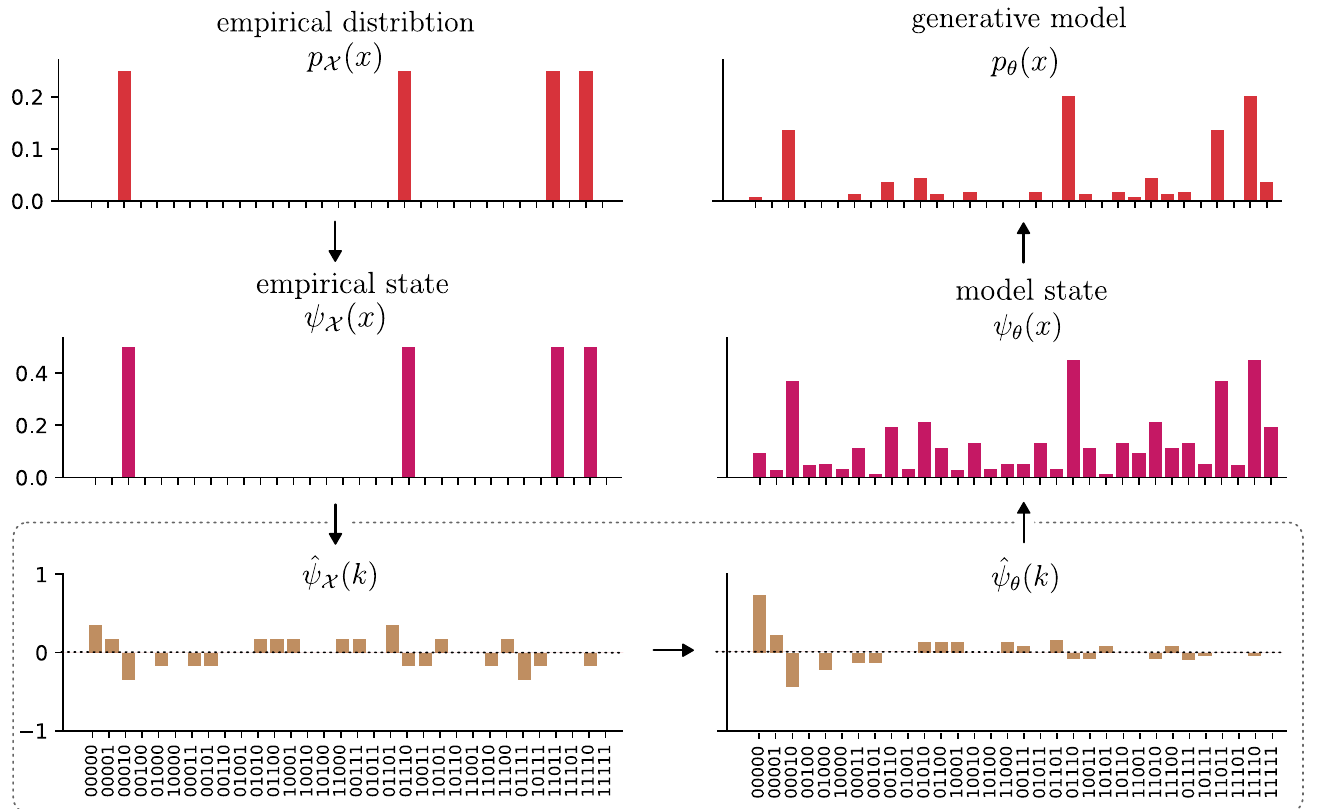}
    \caption{An example of smoothing empirical distributions with quantum computers, using the same filter as in Figure~\ref{fig:smoothing_classical}. Fourier filtering is performed on the amplitudes of the quantum state. We encode the empirical distribution (top left) into a superposition $\ket{\psi_{\mathcal{X}}}$ (mid left), to which we apply the (Walsh) Fourier transform (bottom left). We then apply a non-unitary transformation (with the help of ancillas) to implement the filter 
    $\hat{\psi}_{\theta}(k) \propto (1-2\theta)^{|k|} \hat{\psi}_{\mathcal{X}}(k)$ (bottom right).  Moving back into direct space (mid right), we get a quantum model with a new distribution of amplitudes, whose measurement distribution constitutes the generative model. 
    Comparing to Figure~\ref{fig:smoothing_classical}, it is clear that manipulating the Fourier coefficients of the amplitudes of a quantum model leads to a different generative model, but with a similar structure: high probabilities are amplified and low ones suppressed.}
    \label{fig:smoothing_amplitudes}
\end{figure*}

\subsection{Why empirical smoothing may be classically hard}

We saw that our toy example for a spectral method, generative empirical smoothing, can at least in principle be implemented efficiently on a quantum computer. But how computationally feasible is this strategy classically? In these last two sections we will argue that brute force approaches are doomed to fail, but that care has to be taken when claiming speedups, as some situations allow for implicit ways to efficiently implement empirical smoothing. 

First, let us have a look at efficiently computing probabilities $p(x)$ after classical empirical smoothing. This ability would unlock training a classical model, for example by taking the model class of bandlimited empirical distributions and learning the bandlimiting filter using maximum likelihood estimation. Computing the $k$'th Fourier coefficient of the empirical distribution, 
\begin{align}\hat{p}_{\mathcal{X}}(k) = \frac{1}{\sqrt{2^n}} \sum_{x \in \mathcal{X}} (-1)^{k \cdot x},\end{align} 
is efficient, as it is a sum over tractably many terms. Of course, there are exponentially many such Fourier coefficients. Computing a single probability from an empirical distribution that has been bandlimited to contain only a small number of terms,
    \begin{align}
     p_{\theta}(x) = \frac{1}{\sqrt{2^n}} \sum_{k \in \Omega} \hat{g}(k)\hat{p}_{\mathcal{X}}(k) (-1)^{k \cdot x},
    \end{align}
is tractable. Likewise, computing marginal distributions becomes tractable (by sampling over subsets of Fourier coefficients), which allows us to compute conditional probabilities, opening up the ability to sample from $p_{\theta}(x)$ in an autoregressive manner. Is empirical smoothing then classically tractable?

The issue is the \textit{number} of Fourier coefficients to keep track of. For high dimensions, bandlimiting would have to be limited to a few frequencies in each dimension. For example, in the slowest-growing case of $\mathbb{Z}_2^n$, where there are two frequencies in each dimension, the number of Fourier frequencies with Hamming weight $|k|$ less than some threshold $b$ is given by $\sum_{m=0}^b \binom{n}{m}$. For $n=10,000$ dimensions and $b=2$, we would already have to sum over $50$ million Fourier coefficients to compute the probability of a data point, even for the simplest bandlimited model. For other groups and larger thresholds, this quickly becomes unfeasible. Furthermore, to keep empirical smoothing classically tractable, the filter for the $k$'th coefficient cannot depend on the Fourier coefficients other than $\hat{p}(k)$, as there are intractably many of these (even for the empirical distribution). These two problems clearly do not occur for the quantum computer, which means that empirical smoothing might display quantum speedups for the useful task of generalisation. Of course, whether this strategy is able to solve problems that other approaches to generalisation cannot is an open question.

As a cautionary tale, we want to conclude our analysis of this toy example by showing that for a specific filter it \textit{can} be possible to efficiently sample from a bandlimited empirical distribution on a classical computer. The trick is to use convolution with a kernel, a method that we will discuss later as the standard classical approach for spectral modeling, and hence the most viable ``classical competitor'' for this approach to QML.

\subsection{Example of a dequantisable filter}\label{sec:generative_bandlimiting_kernel}

While we argued that the direct manipulation of the Fourier spectrum required in~\ref{sec:generative_bandlimiting_classical} is not computationally feasible in general, it might very well be possible for very specific Fourier filters. To illustrate this, consider the noise filter used as an example in Figure~\ref{fig:smoothing_classical}:
\begin{align}
\hat{p}_{\theta}(k) = (1-2\theta)^{|k|} \hat{p}_{\mathcal{X}}(k),  
\end{align}
where the hyperparameter $\theta \in [0,1]$ controls the decay rate. Going back into direct space, we get the model
\begin{align}
    p(x) = \sum_{x \in \{0,1\}^n}  p_{\mathcal{X}}(x) \kappa(x-y),   
\end{align}
with $\kappa(x, y) = \theta^{d_H(x, y)}  (1-\theta)^{n - d_H(x, y)}$, which is known as the \textit{noise kernel}. The model can therefore be implemented by a convolution $p_{\mathcal{X}} * \kappa$ in direct space. 
At the same time, $p(x)$ is a typical \textit{kernel density estimation} model, a class of models which define probability distributions as sums of kernel functions centered around the training data. Crucially (and this is only true in rare cases), the kernel has the property that we can sample from $p(x)$ even in high dimensions: we simply sample a training data point and flip each bit with probability $\theta$. This example shows that for some manipulations in Fourier space, kernel methods can even be used for generative models. Clearly, there are a lot of restrictions on the kernel to make this work, but kernel methods have already been used to dequantise a range of supervised quantum machine learning models \cite{sweke2025kernel, sweke2025potential} and are a contender for spectral methods.  

In summary, the simple example of empirical smoothing showed how (group) spectral methods in machine learning can be interesting for machine learning and natural for quantum computers. But it also highlighted subtleties that have to be handled when designing quantum machine learning models---such as the Born rule, the difficulty of manipulating amplitudes, and possible classical dequantisation with kernel methods. These challenges appear in many (but not all) spectral approaches to quantum machine learning. With this illustrative case in mind, let us now get back to the main argument: why quantum computers might be useful to implement spectral methods in machine learning.

\section{Spectral methods} \label{sec:spectral_methods}

As mentioned, we loosely consider \textit{spectral methods for machine learning} as those that design  machine learning models by manipulating their properties in Fourier space. In this section we will rigorously define what we mean by terms like  ``Fourier spectrum'' and ``Fourier coefficients'', in particular in their group-theoretic generalisation. 

\subsection{Motivation}

What do Fourier transforms have to do with groups? Everything, if one looks closely. Let us motivate this with one of the most well-known Fourier transforms, the \textit{discrete Fourier transform}. It transforms a sequence $f_1,...,f_d$ of complex numbers into another sequence of complex numbers. Sometimes the complex values are written as a function $f(x)$ evaluated or ``sampled'' at regular intervals, for example using integer values $x \in \{0,...,d-1\}$. The Fourier coefficients are then given as
\begin{align} 
\hat{f}(k) = \frac{1}{\sqrt{d}}\sum_{x=0}^{d-1} f(x) e^{2 \pi i  \frac{k x}{d}},
\end{align}
where the frequencies $k$ are also in $\{0,...,d-1\}$. 
The inverse Fourier transform is given as 
\begin{align} 
f(x) = \frac{1}{\sqrt{d}}\sum_{k=1}^{d-1} \hat{f}(k) e^{-2 \pi i  \frac{k x}{d}},
\end{align}

Note that the normalisation factor is a matter of convention, and we will here always use the ``balanced'' version which uses the same pre-factor for the Fourier transform and its inverse, and keeps the $L_2$ normalisation of the Fourier coefficient constant. The expressions $e^{2 \pi i  \frac{k x}{d}}$ correspond to Fourier basis functions with integer-valued frequencies $k$, and a Fourier coefficient $\hat{f}(k)$ can be seen as the projection of $f(x)$ onto the $k$'th basis function. The function beyond the interval $0,..,N-1$ is thought to be ``periodically continued'', which means that $f(x) = f(x + N)$. 

But why do---at least if we do not want to incur additional headaches---the $x$ values have to be equidistant? Why this notion of ``periodic continuation''? Why are the basis functions of exponential form? Why are the $k$ also integers? And what makes the Fourier basis special? It turns out that all of these questions have an elegant answer if we interpret the function domain as a group.
As a reminder, a group is a set of elements that has:
\begin{enumerate}
    \item an operation that maps two elements $a$ and $b$ of the set into a third element of the set, for example $c = a + b$,
    \item an ``identity element'' $e$ such that $e + a = a$ for any element $a$
    \item an inverse $-a$ for every element $a$, such that $a + (-a) = e$.
\end{enumerate}
An \textit{Abelian group} has the property that $gg' = g'g$. Its \textit{characters} are functions $\chi:G \to \mathbb{C}$ with the property $\chi(g g')= \chi(g)\chi(g')$.

Returning to the discrete Fourier transform, we can consider the $x$ as elements from the set of integers $\{0,...,d-1\}$, together with a prescription of how to combine two integers to a third from this set. Choosing ``addition modulo N'' for
this operation (which means that $(d-1)+1 = 0$), we get the \textit{cyclic group} $\mathbb{Z}_d$. This choice explains the equidistant property: integers are by nature equally spaced in $\mathbb{R}$. It also explains the periodic continuation, as $x = x + d$ implies $f(x) = f(x +d)$. Furthermore, the integer-valued frequencies $k$ turn out to be elements from the so-called ``dual group'' $\hat{G}$, which in this case looks exactly like the original one. Finally, the Fourier basis functions are exactly the characters of $\mathbb{Z}_d$ (which form a basis of the space of functions on $G$), and the Fourier coefficients hence \textit{the projection of $f(x)$ onto the characters}. 

Note that other standard Fourier transforms can be associated with groups as well: A \textit{Fourier series} transforms periodic functions on the real line with period $2\pi$, which corresponds to the domain $\mathbb{R}/\mathbb{Z}$, while a \textit{continuous Fourier transform} on real numbers relates to the group $(\mathbb{R}, +)$ (the real numbers under addition). Multivariate functions, like those over $\mathbb{R}^N$, are then defined over direct products of groups.

\subsection{Group Fourier transforms}

For simplicity, we will focus on discrete groups---which are most relevant for qubit-based quantum computing---and follow the insightful presentation in \cite{kondor2008group}. Continuous Fourier transforms over locally compact groups simply turn the sum into an integral over a Haar measure on $G$.

For discrete Abelian groups, the standard Fourier transforms generalise as follows: 
\begin{definition}{\textbf{Fourier transform over Abelian groups}}
Let $G$ be a discrete Abelian group and $\chi_k$ its characters from the Pontryagin Dual group $\hat{G}$. The Fourier transform $\mathcal{F}$ maps a function $f:G \to \mathbb{C}$ to a function $\hat{f}:\hat{G} \to \mathcal{C}$:
    \begin{align}
        \hat{f}(k) = \frac{1}{\sqrt{|G|}} \sum_{g \in G}  f(g) \chi^*_k(g).   
    \end{align}
The inverse Fourier transform, implementing the reverse map, computes the function
\begin{align}
f(g) = \frac{1}{\sqrt{|G|}}\sum_{k \in \hat{G}} \chi_k(g) \hat{f}(k).
\end{align}
\end{definition}
Again, the normalisation of forward and reverse Fourier transform, and which one carries the conjugate of the character, is a convention, and sometimes $ \chi_k(g^{-1}) = \chi^*_k(g)$ is used. 

The Fourier transform over a discrete, locally compact \textit{non}-Abelian group looks slightly different from the Abelian Fourier transform. Instead of the characters of the group, they invoke objects called the \textit{irreducible representations} or \textit{irreps}. A \textit{representation} is a map $R:G \rightarrow GL(V)$ from group elements to linear transforms on a vector space $V$. There are many representations for a given group, each referring to a different vector space. A subspace $W \subseteq V$ is called \textit{invariant} if for every group element $g \in G$ and every vector $w \in W$, the result $R(g)w$ is still inside $W$. The representation is said to be \textit{irreducible} if the only invariant subspaces are the trivial zero space $\{0\}$ and the entire space $V$ itself.

With this, we can define the most general version of the group Fourier transform as follows:
\begin{definition}{\textbf{Fourier transform over non-Abelian groups}}\label{def:FT_nonabelian}
Let $G$ be a discrete non-Abelian group and $\mathcal{R}$ the set of its (matrix-valued) inequivalent irreducible representations. The Fourier transform maps a function $f:G \to \mathbb{C}$ to the matrix-valued Fourier coefficients
    \begin{align}
        \hat{f}(\sigma) = \frac{1}{\sqrt{|G|}} \sum_{g \in G}  f(g) \sigma(g)^{\dagger},   
    \end{align}
with $\sigma \in \mathcal{R}$.
The inverse Fourier transform is given by
\begin{align}
f(g) = \frac{1}{\sqrt{|G|}} \sum_{\sigma \in \mathcal{R}} d_{\sigma} \mathrm{tr}\{ \hat{f}(\sigma) \sigma(g)\},
\end{align}
with $d_{\sigma}$ the dimension of the irrep $\sigma$.
\end{definition}
In the non-Abelian case, the Fourier coefficients are matrices of varying sizes (and hence basis dependent, which can complicate their interpretation). Note that the Abelian case is a special case of Definition~\ref{def:FT_nonabelian}, as the irreps of an Abelian group are one-dimensional, and hence equal to their traces, i.e., the characters.

\subsection{Convolution}

Given the prominent role of the convolution theorem 
\begin{align}
    \widehat{f \star g }(\sigma) = \hat{f}(\sigma) \hat{g}(\sigma),
\end{align}
it is useful to also define the group-theoretic generalisation of convolution:
\begin{align}\label{eq:convolution_group}
    (f \star g)(g) = \sum_{g' \in G} f(g (g')^{-1}) g(g').
\end{align}
For Abelian groups, where the group operation is usually denoted by a $+$, while adding an inverse element becomes a $-$, this becomes the more familiar 
\begin{align}\label{eq:convolution_standard}
    (f * g)(x) = \sum_{y \in \mathbb{R}} f(x - y) g(y).
\end{align}
Note that in machine learning, convolution is often conflated with the more common \textit{cross-correlation}, which replaces the minus sign with a plus.

\subsection{Symmetry of the Fourier basis}

The inverse Fourier transform can be understood as a decomposition of a function over $G$. According to the Peter-Weyl theorem, the irrep entries $\{\sigma_{ij}\}_{\sigma \in \mathcal{R}, ij \in \{1,\dots, d_{\sigma}\}}$ form a complete orthogonal basis of $L^2(G)$, the space of square integrable functions on the group. But what is so special about the Fourier basis? A defining feature of the Fourier basis is that the subspaces $L_\sigma \subseteq L^2(G)$ spanned by the set of all irrep entries $\{\sigma_{ij}(g)\}$ for a given $\sigma$ are \textit{invariant under the group action}. This follows from the definition of a representation:
\begin{align}
\sigma(g g') = \sigma(g) \sigma(g'),
\end{align}
which means that a translation by $g'$ does not cause the function to leave the invariant subspace 
\begin{align}
    L_{\sigma} = \text{span}\{\sigma_{11}(g), \sigma_{12}(g), \dots,\sigma_{d_{\sigma}, d_{\sigma}}(g)\}.
\end{align}
For Abelian groups (using additive notation below), where we index the dual group by $k\in\hat{G}$, the irreps are one-dimensional. This means the invariant subspaces $L_k = \rm{span}\{\chi_k(g)\}$ are each spanned by a single character that fulfills 
\begin{align}
\chi_k(g + g') = \chi_k(g) \chi_k (g'),
\end{align}

The significance of this invariance property is discussed at length in the work of Diaconis \cite{diaconis1988group}. For example, if the symmetric group acts by permuting the features of a data vector, this invariance means that the Fourier power spectrum (the energy captured within each invariant subspace) does not change if we permute the order of the features, which is often an arbitrary labeling decision. This is intimately related to the smoothness inherent in the Fourier spectrum: it informs us how the function transforms if we shift it in direct space. In this sense, the Fourier basis can capture important structure in the data under ``shifts'', which helps quantify its simplicity for the design of machine learning models---a claim that we will support with more evidence in the next section.

\subsection{Fourier transforms on homogeneous spaces}

While we focus on spectral methods for functions on groups in this paper, there is an important generalisation to functions on the \textit{homogeneous space} $X$ that a group $G$ \textit{acts on}. (We will denote a group action as $\rhd: G \times X \rightarrow X$ with $g \rhd x = x'$.) This opens up the realm of spectral analysis to many more data types, which is why we will briefly list the relevant definitions here. 

Technically, a homogeneous space $X$ is a space where the group $G$ acts transitively (see Figure~\ref{fig:lifting}). This means one can get from any point $x \in X$ to any other point $y \in X$ using a group operation, $g \rhd x = y$. Let us denote as $H$ the \textit{stabilizer subgroup} of an arbitrary reference point $x_0 \in X$,
\begin{align}
H = \{ h \in G | h \cdot x_0 = x_0 \}.
\end{align}
The cosets $gH = \{gh | h \in H\}$ of the stabiliser subgroup are sets of group elements whose action maps $x_0$ onto the same point $x \in X$. This means that each point in $X$ can be associated with a coset of $H$ in $G$. This allows us to associate $X$ with the quotient space, $X \cong G / H$.

\begin{figure}
    \centering
    \includegraphics[width=0.65\linewidth]{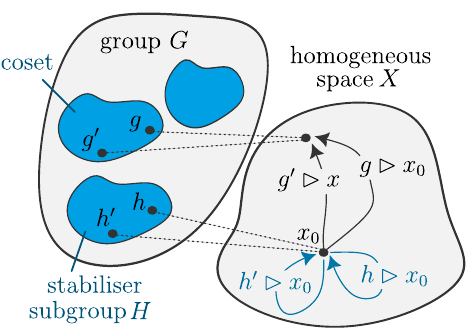}
    \caption{Illustration of the relation between a group and a homogeneous space, which are spaces where every point can be reached by acting with a group element. Group elements that map to the same point live in cosets of the stabiliser subgroup (whose elements map $x_0$ to itself). We can associate the homogeneous space with the quotient space $G/H$.}
    \label{fig:lifting}
\end{figure}

To spectrally analyze a function $f: X \to \mathbb{C}$, we can therefore interpret it as a function $f: G/H \to \mathbb{C}$ and then ``lift'' it to $G$ by making it constant on all group elements in the same coset (i.e, the blue regions in Figure~\ref{fig:lifting}). We define the \textit{lifted function} $f^{\uparrow}: G \to \mathbb{C}$ by 
\begin{align} 
f^{\uparrow}(g) = f(g \rhd x_0).
\end{align}
Mathematically, this symmetry means that $f^{\uparrow}$ is right-invariant under $H$:
\begin{align}
    f^{\uparrow}(gh) = f(gh \rhd x_0) = f(g \rhd x_0) = f^{\uparrow}(g) \quad \forall h \in H.
\end{align}

We can now apply the standard Fourier Transform to $f^{\uparrow}$: 
\begin{align} 
\hat{f^{\uparrow}}(\rho) = \sum_{g \in G} f^{\uparrow}(g) \rho(g)^{\dagger} .
\end{align}
The technique of lifting has enabled the use of group Fourier transforms to rigorously identify convolutional layers with group-equivariant transformations, which gave rise to generalisations of convolutional neural networks \cite{kondor2018generalization} in machine learning research. It is also related to one of the first formulations of hidden subgroup problems, which Kitaev phrased as an Abelian stabiliser problem in 1995 \cite{kitaev1995quantum}. 
In the following we will continue to focus on Fourier transforms of functions on groups rather than homogeneous spaces, but acknowledge the potential that lies in this simple extension of spectral analysis.

\section{Crucial for machine learning}\label{sec:critical_ml}

Smoothness is desirable in machine learning because it leads to robust models that generalise well \cite{pmlr-v137-rosca20a}, and has motivated many methods \cite{JMLR:v15:srivastava14a,noh2017regularizing,miyato2018spectral,dherin2022why} that aim to bias models towards smoothness. Yet, these techniques are scattered and act indirectly on the model spectra, suggesting the need for a broader paradigm for directly imposing simplicity biases. Thus, before discussing quantum machine learning algorithms, we have to address the elephant in the room: Why should Fourier space be a prime place to design the learning properties of a machine learning model? The argument of this paper, which we present in more detail here, is that machine learning fundamentally needs to impose biases towards ``simple models'', and that the Fourier spectrum captures what is meant by ``simple''---for ``conventional'' data types as well as for more specific, group-structured data such as permutations. 

To that end, we will first review the role of simplicity and regularisation in learning theory. We will then take a closer look at the notion of simplicity captured by the low-order part of the Fourier spectrum for three example groups, which showcases the diversity in which ``simple'' can be interpreted. Lastly, we will highlight that spectral methods are already quite prominent in machine learning even if they are hardly ever explicitly mentioned in textbooks: they are implicitly used in methods that involve (stationary) kernels, such as convolutional neural nets, support vector machines, the maximum mean discrepancy to train Generative Adversarial Networks (GANs), and the Neural Tangent Kernel that became a prime framework to investigate the working principle of of deep learning. 

\subsection{Why a simplicity bias is crucial for machine learning }

The foundations of learning theory \cite{vapnik95, valiant1984theory} cast machine learning as the problem of balancing expressivity and simplicity in models built from data. This is made particularly concrete in the framework of \textit{statistical learning theory} developed by Vapnik and Chervonenkis \cite{vapnik1971uniform}, which quantifies the generalisation power of a supervised model through the principle of \textit{empirical risk minimization}. In this framework, the goal of machine learning is to learn some function $f$ from a set of possible functions $\mathcal{H}$ that minimizes the \textit{expected risk} 
\begin{align} 
R(f) = \int_{X}  \; p(x, y) L(f(x), y) \; dxdy, 
\end{align} 
which represents the model's performance on unseen data drawn from a ground truth distribution $p(x, y)$ with respect to some loss metric $L$. Practically, $R(f)$ is exactly what the \textit{test error} is supposed to measure. Since $p$ is unknown, machine learning practitioners try to minimise the expected risk by minimizing the \textit{empirical risk} 
\begin{align} 
\hat{R}_M(f) = \frac{1}{M}\sum_{x \in \text{data}} L(f(x), y), 
\end{align} 
the average error on a training set of size $M$. Importantly for us, learning theory typically upper bounds the expected risk by the empirical risk, plus a term $C(\mathcal{H})$ that accounts for the \textit{capacity} of the model \cite{blumer1989learnability,hastie2009elements},
\begin{align}
    R(f) \leq \hat{R}_M(f) + C(\mathcal{H}).
\end{align}
To get a small expected risk (or test error), one has to decrease both terms on the right hand side. However, these terms are at odds with each other. This is sometimes captured in the ``bias-variance-tradeoff'', a term that refers to the high variance of models that have a low training error (as they overfit the particular data, and produce very different predictions when trained on different training data sets), and a high bias of those that are simple (as they cannot learn the intricacies of the pattern).
Learning, as a conclusion, is the art of balancing the model capacity with its flexibility.

The advent of deep learning has brought some more nuance into this picture. In the past, the capacity term was chosen to capture the expressivity of the model's function class, such as the \textit{VC-dimension} \cite{vapnik1971uniform} that measures how flexible the decision boundary of the best function in the model class is, or the \textit{Rademacher complexity} \cite{bartlett2002rademacher} that measures the average capacity of a model class to classify randomly assigned labels. When deep learning showed increasing evidence of empirical success, one of the biggest ``mysteries'' was why large neural networks---a model class that has an arbitrarily large term $C(f)$ under these measures, as it can learn even unstructured labels \cite{zhang2016understanding} --- generalises so well. The resolution was found in the interplay of large models, big data and the optimisation algorithm (i.e., \cite{belkin2019reconciling, jacot2018neural,  kalimeris2019sgd, keskar2016large}): somehow, when these come together, the models that are effectively found in training are simple, even though training \textit{could} in principle find much more complex models that do not generalise. In other words, deep learning has a hidden, or ``implicit'' regularisation bias. This bias seems to---somewhat paradoxically---fundamentally rely on scale: the \textit{lottery ticket hypothesis} \cite{frankle2018lottery} suggests that while \textit{trained} neural networks are rather compressible (i.e., ``simple''), they cannot be found using smaller architectures that are tailor-made to the solution. Altogether, the insights from deep learning were well summarised by Wilson's \textit{soft simplicity bias}:
\begin{quote}``rather than restricting the hypothesis space to avoid overfitting, [good models] embrace a flexible hypothesis space, with a soft preference for simpler solutions that are consistent with the data''. \cite{wilson2025deep}\end{quote}

\subsection{How a Fourier spectrum defines simplicity}\label{sec:interpreting_spectrum}

If we understand machine learning as the problem to find expressive, scalable model classes that have a simplicity bias, a major component of designing good models is to engineer the simplicity bias. For deep learning, this happened somewhat accidentally, and relies on heavy computational resources. An important open question is hence how to engineer ``soft simplicity biases'' more consciously, and hopefully without the immense energy requirements of large neural networks.

We will now explore in more depth why the Fourier spectrum provides a useful mathematical framework to define---and subsequently exploit---notions of smoothness. This connection is relatively straightforward for the standard Fourier transform on $(\mathbb{R},+)$, where the coefficients capture smoothness, and on $\mathbb{Z}_2^N$, where they capture correlations and interaction effects. Building on this, we will present an intuition for the Fourier spectrum of the non-Abelian symmetric group $\mathbb{S}_n$, which captures significantly more complex correlation patterns. 

Before proceeding, we emphasize that our notion of smoothness in both the Abelian and non-Abelian cases depends on some assumptions. While we may always define smoothness from first principles in terms of a particular set of generators, we generally rely on empirical data and prior beliefs in order to choose a natural set of generators that best describes our observations in terms of smooth functions. In this way, the treatment considered here may be generalised to a large class of group-theoretic settings.

\subsubsection{Continuous features}

\begin{figure}[t]
    \centering
    \includegraphics[width=0.8\linewidth]{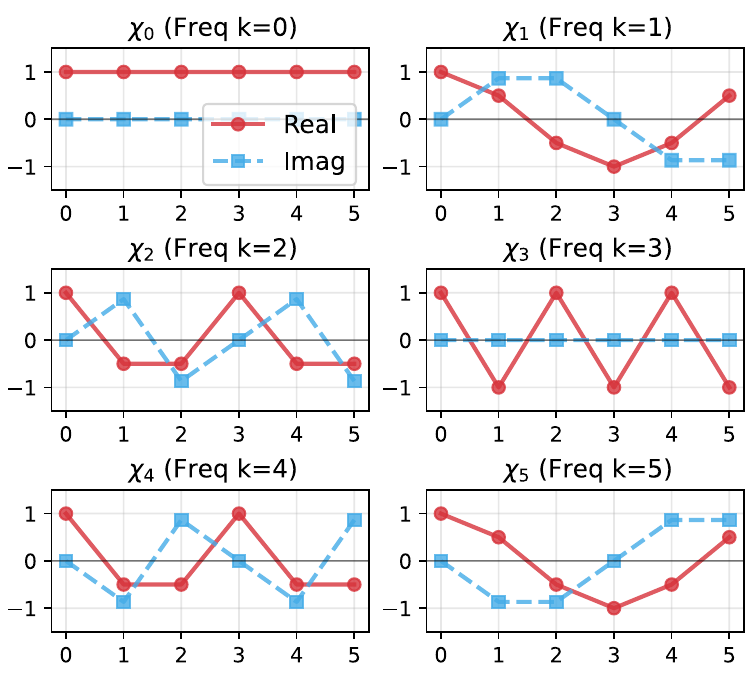}
    \caption{Fourier basis functions $\chi_k(x)$ for the cyclic group $\mathbb{Z}_d$ with imaginary part in blue and real part in red.}
    \label{fig:zd_characters}
\end{figure}

The standard \textit{continuous Fourier transform}, which we mentioned in the introduction, transforms functions on the group $\mathbb{R}$ (or more precisely, $(\mathbb{R}, +)$), with characters $\exp{(2 \pi i x k)}$. Its multidimensional version maps functions on products of this group, $\mathbb{R}^N$. The Fourier coefficients of a function over $\mathbb{R}^N$ are related to its \textit{smoothness}.

The most obvious connection between the Fourier spectrum and smoothness of a bounded function on $\mathbb{R}^N$ is given by the canonical definition of smooth functions as infinitely differentiable. One can show that such functions have a super-polynomial asymptotic decay in Fourier space, or
\begin{align}
    |\hat{f}(\omega)| = \mathcal{O}\left( |\omega|^{-m} \right) \; \text{as} \; |\omega| \to \infty.
\end{align}
Using the formula for partial integration, and assuming that $f$ decays at $\pm \infty$, one can write a Fourier coefficient $\hat{f}(k)$ as the Fourier coefficient of the derivative of $f$,
\begin{align}
    \hat{f}(k) &= \int dx f(x) e^{-2 \pi i kx}\\
    &=  \frac{1}{(2\pi i k)^m}\int dx \left( \partial^m_x f(x) \right) e^{-2 \pi i kx}\\
    &= \frac{1}{(2\pi i k)^m} \widehat{\partial^m_x f(x)}.
\end{align}
Bounding $|\widehat{\partial^m_x f(x)}|$ by the finite $L_1$ norm of the input function here one sees that
\begin{align}
    |\hat{f}(k)| \leq c/k^m
\end{align}
for some constant $c$, which is a spectral decay with the power of $m$. If $m$ goes to infinity, the asymptotic decay has to be faster than any polynomial, and hence ``super-polynomial''. This confirms the intuition we have from signal processing, namely that smooth functions have to be constructed from slowly oscillating basis functions: their Fourier coefficients are concentrated in the lower-order part of the spectrum.

We want to briefly note that some encoding strategies in quantum machine learning represent data as computational basis states. In these cases we cannot work with $\mathbb{R}^N$, but we can limit the data to a continuous interval in each dimension, which we chop into $d$ bins. This allows us to work with a finite-precision approximation of $\mathbb{R}^N$ expressed by products of the cyclic group, $\mathbb{Z}_d^N = \mathbb{Z}_d \times \dots \times \mathbb{Z}_d$. Figure~\ref{fig:zd_characters} shows that the real and imaginary part of its characters $\exp{(2 \pi i \frac{x k}{d})}$ are indeed ``coarse-grained'' sine and cosine functions. A similar proof to above can be used to link the ``modulus of smoothness'' of a function over this Abelian group to a decay in Fourier space.

\subsubsection{Binary features}

\begin{figure}[t]
    \centering
    \includegraphics[width=0.8\linewidth]{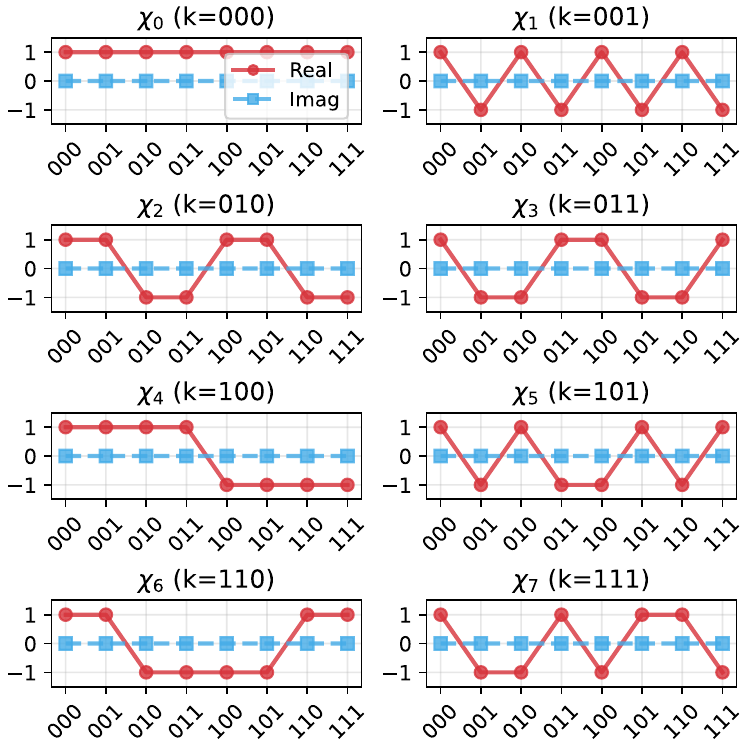}
    \caption{Fourier basis functions $\chi_k(x)$ for the boolean group $\mathbb{Z}_2^n$ with imaginary part in blue and real part in red.}
    \label{fig:z2n_characters}
\end{figure}

In Section~\ref{sec:toy_example} we saw the \textit{Walsh transform} that acts on functions on a product of cyclic groups with two elements, $\mathbb{Z}_2  \times \dots \times \mathbb{Z}_2 = \mathbb{Z}_2^n$. This group can be thought of as the set of bitstrings $x \in \{0,1\}^n$ with addition modulo $2$, and gives rise to boolean cube arithmetic. We also saw that the Fourier coefficients of a probability distribution over the group $\mathbb{Z}_2^n$ can be interpreted as expected parities of subsets of bits: The $1$s in the frequency $k \in \{0,1\}^n$ define \textit{which} bits are selected, and the Hamming weight of $k$ defines the \textit{order} of the frequency. The parity functions in Eq.~\ref{eq:parity_func} are nothing but the characters $e^{i \pi x \cdot k} = (-1)^{x \cdot k}$ (where $x \cdot k$ is the bitwise inner product modulo $2$), and the $k$ are elements from the dual group. The Fourier coefficients are then written as 
\begin{align}
    \hat{f}(k) = \frac{1}{\sqrt{2^n}} f(x) e^{i \pi x \cdot k}.
\end{align}

We will now show why the Fourier coefficients of a probability distribution over $\mathbb{Z}_2^n$ also have other statistical interpretations, namely as \textit{moments} when $f$ is a probability distribution, and \textit{interaction effects} if it is a response function (as known in the literature on experimental designs \cite{kirk2009experimental}). Together these insights show that the lower-order characters of $\mathbb{Z}_2^n$, and hence low-order frequencies of a model, have rich interpretations as the statistically simple part of a model.

\textit{Moments}. Remember that a moment of a probability distribution $p(x)$ is defined as\footnote{For ease of notation, we will not use different symbols for a random variable $X$ and its realisation $x$ as standard in the statistics literature.} 
\begin{align}
     E[x^{k}] =  \sum_x p(x) x^k.
\end{align}
For multiple binary random variables $x_1,...,x_n$, a mixed moment of order $k = k_1+ \dots + k_n$ is given as 
\begin{align}
E[x_1^{k_1}\dots x_n^{k_n}] &= \sum_{x_1,...,x_n} p(x_1,...,x_n) x^{k_1}_1 \dots x^{k_n}_n. 
\end{align}
For binary random variables, the $k_1,...,k_n$ have to be in $\{0,1\}$ as $x_i^2 = x_i^0, x_i^3 = x_i^1,$ etc. The relation between moments and Fourier coefficients is exposed if, for the moment, we move to a group that is perfectly isomorphic to $\mathbb{Z}_2$, namely the second-order cyclic group
$(\{-1,1\}, \times )$ of a ``spin variable'' under multiplication. With this change, i.e., if we define our model distribution on $\bar{x} \in \{-1, 1\}^n$ instead of $x \in \{0,1\}^n$, \textit{the Fourier coefficients of $p(\bar{x})$ are precisely its moments}. This can be seen since
\begin{align} 
    (-1)^{k x} &= \prod_{i | k_i = 1} (1-2x_i)= \prod_{i | k_i = 1} \bar{x}_i
\end{align}
where $x_i, k_i$ is the $i$'th bit of $x, k$. As the shape of the function does not change when moving from one group to its isomorphic equivalent, we can intuitively interpret Fourier coefficients of $\mathbb{Z}_2^n$ as moments. Consequently, constructing models with a decaying Fourier spectrum defines a model class that captures only low-order moments. It is important to note that this link is a special property of $\mathbb{Z}_2^n$.  More generally, the Fourier spectrum can be defined as the \textit{characteristic function} of a probability distribution, and the moments can be ``generated'' by taking partial derivatives of that function. However, we will see below that also for the symmetric group, low-order Fourier coefficients capture some kind of low-order correlation.

\textit{Interaction effects.} It turns out that a field of statistics called ``experimental design'' \cite{kirk2009experimental} offers an alternative, very intuitive interpretation for the Fourier coefficients of a boolean function $f(x)$, $x \in \{0,1\}^n$ as so-called \textit{interaction effects}, which capture how much a change in one variable affects the influence that other variables have on $f$. We will illustrate this with a small example. 

Imagine that bistrings $x$ signify combinations of drugs, modeled by features $x_1,...,x_n$ that can either be administered ($x_i = 1$) or not ($x_i=0$) for $i=1,...,N$. For now, consider every possible combination $x \in \{0,1\}^n$ of $N$ drugs was tested in a large trial, and the response $f(x)$, such as a recovery rate of patients, was measured.\footnote{In so-called \textit{fractional factorial designs} statisticians look at trials that did not include all $x$ treatment combinations, which is much closer in spirit to our ultimate goal to work with finite data.}  A crucially important question is how the first drug $x_1$ influences $f(x)$, but independent of whether or not we administered drug $x_2$. We can do this by measuring the difference between the response for $x_1 = 1$ and $x_1 = 0$, averaged over all possible values of a fixed $x_2$:
\begin{align}
L(x_1) = \big(f(10) - f(00)\big) + \big(f(11) - f(01)\big).
\end{align}
The same is true for the influence of $x_2$ on $f$:
\begin{align}
L(x_2) = \big(f(01) - f(00)\big) + \big(f(11) - f(10)\big).
\end{align}
The above can be summarised by introducing the \textit{conditional effect}
\begin{align}
L(x_1 | x_2=a) = \big(f(1a) - f(0a)\big),
\end{align}
and writing
\begin{align}
L(x_1) = L(x_1 | x_2=0) + L(x_1 | x_2=1).
\end{align}
Finally, we can ask how the drugs interact, for instance how the change in response function $f$ when changing $x_1$ \textit{differs} 
 when we change $x_2$. (This is in fact a second derivative, it is the change of the change.) The interaction can be defined by taking the difference of the conditional effects above:
\begin{align}
L(x_1 x_2) = L(x_1 | x_2=1) - L(x_1 | x_2=0).
\end{align}
Clearly, the value $L(x_1)$ corresponds to the  $\mathbb{Z}_2^n$ Fourier coefficient $\hat{f}(k=10)$ of $f$, while $L(x_2)$ corresponds to $\hat{f}(k=01)$ and $L(x_1 x_2)$ corresponds to $\hat{f}(k=11)$. This generalises to higher-order interactions: the effect $L(x_{i_1},...,x_{i_m}) = L(\{x_{i_l}\}_{l=1}^m)$ that the interaction of a subset of drugs $x_{i_1},...,x_{i_m}$ has on $f$ is given by the Fourier coefficient $f(k)$, where $k$ has ones in positions $i_1,...,i_m$ and zeros otherwise. Overall, there seems to be a deep connection between groups, spectral analysis and experimental design \cite{bailey2015spectral}, which gives the Fourier spectrum of a model a tangible interpretation. 

\subsubsection{Permutations} \label{sec:interpretation_sn}

To venture into the realm of non-Abelian groups, we want to summarise the intuition that Persi Diaconis built for the notion of simplicity contained in the Fourier spectrum of functions over the symmetric group $\mathbb{S}_n$ \cite{diaconis1988group} (see also \cite{huang2007efficient} for a gentle introduction). Loosely speaking, the Fourier coefficients are related to expected patterns such as ``\textit{objects $A D E$ are in positions $2, 4, 5$, while objects $BC$ are in positions $1, 3$},'' with more complex patterns corresponding to higher-order coefficients.

The symmetric group consists of permutations over $n$ objects, which are maps $\pi$ from a set of $n$ elements to itself. For example, for the set of $n=4$ elements $\{1,2,3,4\}$, a possible permutation is the map 
\begin{align}
    \pi(1) &= 2,\\
    \pi(2) &= 4,\\
    \pi(3) &= 1,\\
    \pi(4) &= 3.
\end{align}
The map $\pi$ can also be written as the tuple $(2, 4, 1, 3)$, in one-line notation, which is interpreted relative to the base order $(1,2,3,4)$. Permutations are combined by chaining these maps. The symmetric group is a fundamental construction in representation theory. The significance of this group in machine learning is twofold. First, numerous datasets can naturally be expressed as permutations, most notably as rankings in information retrieval (search engines), preference ordering (voting, product choices) or object tracking (computer vision and identity management). Second, as mentioned previously, permutations can also \textit{act} on the constituents of a data point, as in the above example on feature permutation. 

Without going into the vast details of the representation theory of $\mathbb{S}_n$, we note that the irreducible representations of the symmetric group are labeled by integer partitions of $n$, defined as tuples of positive integers $(\lambda_1,\lambda_2,\dots,\lambda_d)$ such that $\sum_{i}^d\lambda_i = n$ and $\lambda_1\geq \lambda_2\geq\dots\geq \lambda_d$. A natural partial order for the irreps—or the frequencies—is established by comparing the partial sums $\sum_{l=1}^{i}\lambda_l$ of two such partitions: a frequency is of higher order (dominates) if, for all possible $i$, this sum is greater than or equal to the corresponding sum of the other partition (see, for example, \cite{huang2009fourier} Def. 29). This is formally known as the \textit{dominance order}.

The entries of the matrix-valued Fourier coefficients of a function $f(\pi)$ over $\mathbb{S}_n$, are projections of $f$ onto the normalized matrix elements $\sigma(\pi)_{\lambda,ij}$ of the irreducible representations, which by Peter-Weyl theorem form an orthogonal basis for the space of all scalar functions over the group, $L^2(\mathbb{S}_n)$,
\begin{align}
\hat{f}(\lambda)_{i,j} = \sum_{\pi} f(\pi) \sigma(\pi)_{\lambda,ij}.
\end{align}
Note that we briefly switch to addressing an irrep by an index $\lambda$ to make the frequencies notationally more explicit. The elements $\sigma(\pi)_{\lambda,ij}$ form a basis for the invariant subspace (modules) $V_\lambda$ corresponding to the irreps of $\mathbb{S}_n$. Unfortunately, it is quite difficult to make intuitive sense of how such Fourier coefficients relate to the complexity of $f(\pi)$, or, in the case of probabilistic models $p(\pi)$, to low- or high-order correlations. 

Diaconis' insight was that we \textit{can} interpret the Fourier spectrum by looking instead at projections of $f(\pi)$ onto the basis vectors $e_t$ of some other non-orthogonal subspaces $M_{\lambda}$ that we call the ``marginal subspaces''. These subspaces are indexed by the same integer partitions as the irreps of $\mathbb{S}_n$, and are, technically called Young permutation modules. The invariant subspaces $V_\lambda$ are the block-diagonalisation of these permutation modules. The crux is that the basis functions of $M_\lambda$ are simply indicator functions over specific permutation patterns, defined by tabloids (specific mappings of subsets). For example, the marginal probability of a distribution $p(\pi)$ evaluating the specific pattern $S \to T$, where $\substack{\{2\} \to \{4\} \\ \{1,3,4\} \to \{1,2, 3\}}$, is given by the projection onto the corresponding indicator function:
\begin{equation}
P_{\substack{\{2\} \to \{4\} \\ \{1,3,4\} \to \{1,2, 3\}}} = \sum_{\pi} p(\pi) \mathbb{I}[(\cdot,\cdot,\cdot,2)]
\end{equation}
where the indicator function inside the sum is $1$ on all permutations where item $2$ is moved to position $4$, while $1, 3, 4$ are permuted into position $1, 2, 3$ (see Figure~\ref{fig:m_sn}). Projections onto these basis functions can be interpreted as how much weight a function has with regard to this specific structural pattern (see Figure~\ref{fig:m_sn}). With this, it becomes meaningful to associate ``lower-order'' subspaces with patterns that involve only a few elements.

\begin{figure}
    \centering
    \includegraphics[width=0.65\linewidth]{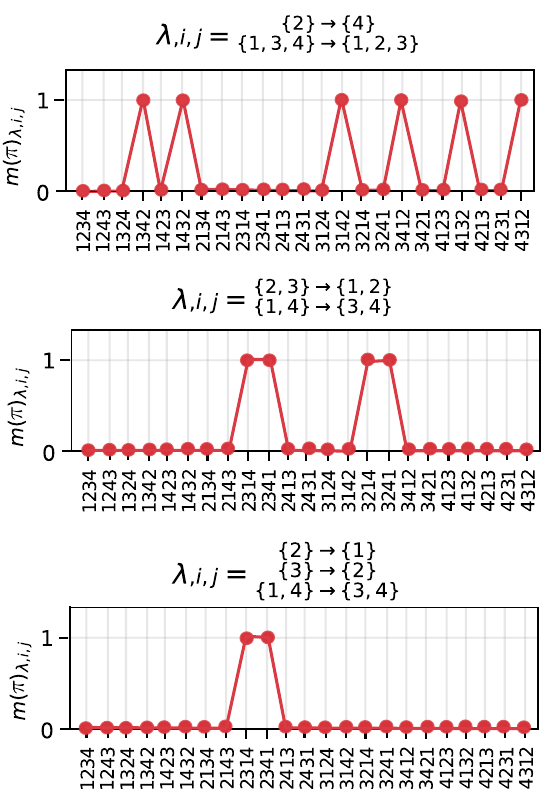}
    \caption{Examples for basis functions $\hat{m}({\lambda})_{i,j}$ of the ``marginal'' subspaces, which are indicator functions that are $1$ for certain permutation patterns. Shown are basis functions from the three marginal subspaces corresponding to frequencies indexed by the integer partitions $(3,1)$, $(2,2)$ and $(2,1,1)$. It is visible that the highest order (bottom) has support over fewer permutations, than the lowest order (top). The marginal subspaces lend interpretation to Fourier coefficients, which define different subspaces that are ``pure'' versions of those spanned by the marginal basis.}
    \label{fig:m_sn}
\end{figure}

Formally, the relationship between these statistically easy-to-interpret marginal subspaces $M_{\lambda}$ and the invariant subspaces $V_\mu$ relevant in spectral analysis (Specht modules) is given by their decomposition via Kostka numbers $K_{\mu\nu}$:
\begin{equation}
M_\lambda \cong\bigoplus_{\mu\trianglerighteq\lambda}K_{\mu\lambda}V_\mu,
\end{equation}
where $\trianglerighteq$ denotes the dominance order of the irreps. Intuitively, this means that $M_{\lambda}$ is composed of the invariant subspace $V_{\lambda}$ along with all higher-order invariant subspaces. We can therefore interpret $V_{\lambda}$ as the part of $M_{\lambda}$ that gets added when we move from a lower-order representation to ${\lambda}$. Consequently, the Fourier subspaces $V_{\lambda}$ are the ``pure'' ${\lambda}$-th order versions of the marginal subspaces $M_{\lambda}$. This allows us to understand the Fourier spectrum of the symmetric group as the ``pure'' part of statistical averages over certain permutation patterns. 

While the precise theory behind this notion is rather technical, there are some surprisingly tangible applications of this intuition for identity management \cite{huang2009fourier}, card games \cite{diaconis1988group}, the analysis of election data \cite{diaconis19891987}, cliques in the decision making of the supreme court~\cite{lawson2006spectral}, as well as for experimental designs~\cite{bailey2015spectral}. We will review a recent paper co-authored by some of us in Section~\ref{sec:symmetric_modelling} that shows how quantum computers could, in principle, help to unlock some of these methods by moving efficiently between the Fourier and direct space of functions over the symmetric group~\cite{belis2026}.

\subsection{Fourier methods in machine learning}
The machine learning literature features little \textit{explicit} mention of Fourier methods besides a few exceptions such as: Random Fourier Features \cite{rahimi2007random}, the use of the Fast Fourier Transform to speed up certain transformer layers \cite{lee2022fnet}, or specialized architectures for domains such as time-series data~\cite{yi2025FFTsurvey} or partial differential equations~\cite{li2021fno}. Kernels and convolution, on the other hand, are ubiquitous concepts that have warranted their own textbooks. Their utilization range from neural tangent kernels explaining the mechanism of deep learning \cite{jacot2018neural} and generalised convolution in geometric deep learning \cite{bronstein2021geometric}, to traditional kernel methods like support vector machines and Gaussian processes \cite{kernel, steinwart2008support, wilson2013gaussian}. Convolution and kernels, however, are deeply related to spectral methods, a connection we will illustrate in the following. We will use the general language of locally compact Abelian groups, with $g \in G$ as the elements in direct space and $k \in \hat{G}$ as the frequencies, but this can be readily translated to the more familiar case in machine learning of $G = (\mathbb{R}^N, +)$, and sometimes also to locally compact non-Abelian groups. 

\subsubsection{Stationary kernels are filters in Fourier space}
First, let us define what a kernel is.
\begin{definition}
    Let $G$ be a locally compact group. A \textit{kernel} is a symmetric, positive definite function $\kappa: G \times G \to \mathbb{C}$. 
\end{definition}
If a kernel only depends on the relative value $g g'$, it is known as a stationary kernel.
\begin{definition}
    A \textit{left stationary kernel} is a kernel that is invariant with respect to the left group action, i.e., 
    \begin{align}
        \kappa(hg, hg') = \kappa(g, g') \;\;  \forall g, g', h \in G,
    \end{align} 
    with an equivalent definition of right stationary kernels. 
\end{definition}
Note that a stationary kernel can be written as a function of only one element, 
    \begin{align}
        \kappa(g, g') = \phi(g g'^{-1}).
    \end{align}
Most of the widely used kernels are stationary, such as the Gaussian or Laplacian kernel. 

The crucial property of stationary kernels is that convolving some function $h(x)$ with such a kernel multiplies each Fourier coefficient $\hat{h}(k)$ of $h$ with the kernel's Fourier coefficient $\hat{\phi}(k)$:
\begin{align}
    (h * \phi)(g) = \sum_{g'\in G} h(g) \phi(g g'^{-1}) = \mathcal{F}^{-1} \{ \hat{h}(k)\hat{\phi}(k)\}. 
\end{align}
The convolution hence implements a \textit{spectral filter}, and the Fourier spectrum of the stationary kernel controls the properties of the resulting function $h * \kappa$.\footnote{Note that another way of stating this is that the eigenfunctions of the integral operator for stationary kernels over the uniform measure applied to some function $f$ are the Fourier basis functions, and the eigenvalues are $f$'s Fourier coefficients.} 

Many machine learning models rely on convolution, although the meaning of the abstract function $h(x)$ varies widely between different examples. 
For example, \textit{Support Vector Machines} \cite{steinwart2008support} can be shown to use the model class of linear combinations of kernels
\begin{align}
    f(x) = \sum_{x' \in \mathcal{X}} \kappa(x, x') a(x'),
\end{align}
which can be understood as the convolution of the kernel with some weight function $a(x)$ (which is only non-zero on the ``support vectors''). 

The \textit{Maximum Mean Discrepancy} \cite{gretton2012kernel} is a kernel-based distance measure for two probability distributions $p(g)$ and $q(g)$ used for generative learning and can be written as 
\begin{align}   
\text{MMD}^2(p,q) &= \sum_{g, g'} \Delta(g') \Delta(g) \kappa(g, g'),
\end{align}
with $\Delta(g) = p(g) - q(g)$. Using Plancherel's theorem, this expression can be shown to translate into $\sum_k |\hat{p}(k) - \hat{q}(k)|^2 \hat{\phi}(k)$, which means that the (squared) MMD measures the distance between the Fourier coefficients $\hat{p}(k)$, $\hat{q}(k)$ of the distributions, weighted by the Fourier coefficients of the kernel.

\textit{Convolutional neural networks} apply a convolution on a (tractable) function $I(x, y)$ that can be interpreted as a pixel value for each coordinate $x, y \in [0,\dots,d]$ of a $d\times d$-dimensional image, or of a layer after processing an image\footnote{Note that in practice, this equation gets implemented as the cross-correlation $\sum_{s,t} I(s+x, t+y) \kappa(x, y)$.},
\begin{align}
    I'(x, y) = \sum_{s,t} I(s, t) \kappa(x-s, y-t).
\end{align}
\textit{Geometric deep learning} \cite{bronstein2021geometric} essentially generalises this to other domains such as graphs. It can be understood as designing spectral filters for the inputs (understood as functions over some group), rather than for a function over group-structured inputs---a feat that is much more computationally tractable. While this is not the focus of this paper, geometric deep learning provides a lot of adjacent evidence that group spectral properties are useful to learn from data.

Note in all of the above, a common choice of a kernel for $G = \mathbb{R}^N$ is the stationary \textit{Gaussian} or \textit{radial basis kernel}:
\begin{align}
    \kappa(x, x') =  \exp\left(-\frac{\|x-x'\|^2}{2\sigma^2}\right).
\end{align}
A Gaussian function has a special Fourier spectrum, which is also a Gaussian, and hence has support over all inputs, but a strong decay that suppresses higher-order coefficients. This imposes a widely useful (albeit ``hard'', or unlearnable \cite{wilson2013gaussian}) regularisation on the model class, which biases them towards smooth, or simple models. 

Unsurprisingly, machine learning research has seen many attempts at learning the kernel, which for stationary kernels learns a bias in Fourier space. The first attempts were to combine multiple static kernels, and learning the weights that organise the influence of each of these distance measures \cite{gonen2011multiple}. Notably, Wilson et al. \cite{wilson2013gaussian} propose to use a kernel whose Fourier spectrum is a normalised mixture of a finite number of Gaussian distributions with trainable means and deviations, and thereby designs the spectral bias directly. After deep learning became popular, kernel learning moved to the idea of using deep neural networks to extract the feature vectors fed into a kernel method \cite{wilson2016deep, liu2020learning}. The quadratic dependence on the data was typically circumvented with sampling methods \cite{yang2015carte}. While still in use for domains that rely on the uncertainty quantification that kernel methods provide, learning the parameters of the kernel method together with the parameters of the neural network turned out to be difficult and issues such as ``feature collapse'' and overfitting are frequently observed \cite{ober2019benchmarking, ober2021promises}.

\subsubsection{The spectral bias of deep learning}

The \textit{spectral bias} describes the tendency of deep neural networks to learn low-frequency components of a target function significantly faster than high-frequency components. The phenomenon was first observed concurrently by \citep{rahaman2019spectral} and \cite{xu2019frequency} (who used the term ``F-Principle''). \citet{rahaman2019spectral} use an analytic derivation that allowed the efficient estimation of Fourier coefficients for neural networks with ReLU activations, and show empirically that the Fourier coefficients of the model match those of the ground truth from low to high order during training. At the same time, they find that lower-order coefficients are more robust against perturbing the \textit{parameters} of the neural network, and that adding high-frequency noise to data does not disturb the net's performance. \citet{xu2019frequency} on the other hand compare averages of the low and high-order parts of the training error's Fourier spectrum, with similar conclusions. The computational method to make statements about the Fourier transform of high-dimensional functions was later refined by \citet{kiessling2022computable}, who use a sinc-kernel to extract the averages by convolution---once more making Fourier space computationally accessible with kernels.

But not only the empirical tools, also later theoretical explanation of the spectral bias involves kernels. A range of studies \cite{cao2019towards, basri2020frequency} have contributed evidence that the Neural Tangent Kernel (NTK) \cite{jacot2018neural}---a kernel that famously describes the training dynamics of deep neural networks, and one of the major theoretical tools in the analysis of deep learning---carries this spectral bias for some neural network architectures. More precisely, it has been shown that the $k$'th eigenvalue of the NTK, $\lambda_k$, controls how fast a projection of the training error $\epsilon$ onto the corresponding eigenfunction converges \cite{arora2019exact},
\begin{align}
    \epsilon_k(t) = \epsilon_k(0) e^{-\eta \lambda_k t},
\end{align}
where $\eta$ is the learning rate. Patterns associated with large eigenvalues are therefore learned quickly, while patterns associated with small eigenvalues are learned slowly. It follows that if the NTK is stationary and the data distribution uniform (for example, when the input data is uniformly distributed over the unit sphere \cite{cao2019towards}), the eigenfunctions of the NTK are the Fourier basis functions, and the eigenvalues the network's Fourier coefficients. This provides an analytical explanation for the spectral bias for certain situations.

In summary, while spectral methods seem to lack explicit prominence in mainstream machine learning research, the widespread use of kernels and convolution shows that many established ML techniques fundamentally rely on manipulating---albeit implicitly---the Fourier spectrum of a model or distance measure.  

\section{Natural for quantum computers}\label{sec:natural_quantum}

An important---albeit, as we will discuss later in this section, not the only---argument that quantum computers could be a natural hardware for spectral methods in machine learning rests on the ability of quantum computers to efficiently implement Fourier transforms on the amplitudes of a quantum state. Given a quantum state $\ket{\psi}$, we can interpret the amplitudes $\psi(x) = \langle x | \psi \rangle$ in computational basis as a complex-valued function over $x$. The \textit{Quantum Fourier Transform} (QFT) prepares a quantum state whose amplitudes are given by the discrete Fourier transform of $\psi(x)$, 
\begin{equation}
    \sum_x \psi(x) \ket{x} \to \sum_k 
    \hat{\psi}(k) \ket{k}.
\end{equation}

But how is $\ket{x}$ associated with a group? And when are efficient quantum Fourier transforms known to exist? In this section we will put these questions onto more solid footing, which requires understanding the Fourier transform as a basis change in a vector space. We will also give explicit formulas of how the Fourier coefficients of amplitudes relate to the Fourier coefficients of standard supervised and generative quantum models built from these quantum states. Lastly, we will look at situations where quantum models can design Fourier spectra without making use of the QFT.

\subsection{Connection to quantum Hilbert spaces}\label{sec:fourier_quantum}

We can represent discrete function $f:G\to \mathbb{C}$ as a vector $v \in \mathbb{C}^{|G|}$ of function values
\begin{align}
    f \leftrightarrow \begin{bmatrix}
           f(g_1) \\
           f(g_2) \\
           \vdots \\
           f(g_{|G|}).
\end{bmatrix}
\end{align}
The Fourier transform can be understood as the matrix that changes from the standard basis 
\begin{align}
    e_{g_1} = \begin{bmatrix}1 \\ 0 \\ \vdots \\ 0 \end{bmatrix}, \dots, e_{g_{|G|}} = \begin{bmatrix}0 \\ 0 \\ \vdots \\ 1 \end{bmatrix}
\end{align} 
to the Fourier basis of character-vectors
\begin{equation}
    \left\{ \begin{bmatrix}
           \chi_k(g_1) \\
           \chi_k(g_2) \\
           \vdots \\
           \chi_k(g_{|G|}) 
    \end{bmatrix}\right\}_{k \in \hat{G}}
\end{equation}
in the Abelian case, and matrix elements of the irreducible representations
\begin{equation}
    \left\{ \begin{bmatrix}
           \sigma_{ij}(g_1) \\
           \sigma_{ij}(g_2) \\
           \vdots \\
           \sigma_{ij}(g_{|G|}) 
    \end{bmatrix} \right\}_{\sigma \in \mathcal{R}, ij \in \{1,\dots, d_{\sigma}\}}
\end{equation}
in the non-Abelian case. 

To move to quantum states, we have to associate the computational basis with the standard basis vectors 
\begin{align} \ket{g} \leftrightarrow e_{g}. 
\end{align}
This allows us to interpret the state in computational basis as the vector representation of a function $\psi(g)$ on the group,
\begin{align}
    \sum_{g \in G} \psi(g) \ket{g}.
\end{align}
The QFT is then a unitary operator which changes into the Fourier basis, and is given by
\begin{align}
    F = \frac{1}{\sqrt{|G|}} \sum_{g \in G} \sum_{\chi_k \in \hat{G}} \chi_k(g) |k \rangle \langle g|
\end{align}
for the Abelian case, and 
\begin{align}
    F = \sum_{x \in G} \sum_{\sigma \in \mathcal{R}} \sqrt{\frac{d_{\sigma}}{|G|}} \sum_{i,j=1}^{d_{\sigma}} \sigma(g)_{ij} |\sigma, i, j \rangle \langle g|,
\end{align}
for the non-Abelian case (see also \cite{Childs_2010}).

The view of the Fourier transform as a basis change has a deeper root in representation theory that we want to briefly allude to. Technically, associating the computational basis with the standard basis for each group element means associating the Hilbert space with the \textit{regular representation} of the group.  Remember that a representation is a map $R:G\to GL(V)$, which ``represents'' the group as a linear transformation on some vector space. The vector space $V$ of the regular representation is spanned by basis states $\{e_{g_i}\}$ (which means that $V$ has dimension $|G|$). The \textit{right regular representation} is then defined as the map with the property $R(h) e_g = e_{hg}$ with $h, g \in G$ (while the \textit{left regular representation} fulfills $R(h) e_g = e_{gh}$). This allows us to formulate an alternative definition of the group Fourier transform:
\begin{definition}
The Fourier transform is a basis change $F$ that block-diagonalises the (left and right) regular representation of $G$.
\end{definition}
In Section~\ref{sec:resource} we will see that if quantum states are associated with other representations than the regular one fixed by interpreting a computational basis state with a group element, we can perform a spectral analysis that is similar to Fourier analysis. However, the basis change is given by other transforms (such as the quantum Schur transform \cite{bacon2006efficient, krovi2019efficient, burchardt2025high}).

\subsection{When do efficient QFTs exist?}

An efficient QFT can be decomposed into $O(\text{poly}(n))$ elementary gates. While the most prominent version is the QFT over the cyclic group $\mathbb{Z}_{2^n}$, which facilitates Shor’s factoring algorithm with a gate complexity of $O(n^2)$, the QFT is known to be efficient for all finite Abelian groups. This generalization is made possible by the fundamental theorem of finitely generated Abelian groups, which states that any finite Abelian group is isomorphic to a direct product of cyclic groups. Since the QFT of a direct product group $G \times G'$ can be implemented as the tensor product of the individual transforms ($QFT_G \otimes QFT_{G'}$), the ability to efficiently transform cyclic factors implies efficiency for the entire Abelian class\footnote{See Section 6.2 in \cite{childs2017lecture} on the problem of \textit{finding} the isomorphic cyclic group.}. 

Beyond the Abelian case, efficient QFT algorithms have been established for certain non-Abelian families, such as the symmetric group $\mathbb{S}_n$, metacyclic groups, the wreath product of poly-sized group, and metabelian groups \cite{beals1997quantum, puschel1999fast, hoyer1997efficient, moore2006generic}. Many of these QFTs rely on the mechanism behind Cooley-Tucker's Fast Fourier Transform (FFT), which uses a ``subgroup tower'' to decompose a Fourier transform on a group into those over subgroups. Quantum computers can elegantly parallelise this divide-and-conquer strategy \cite{moore2006generic}, turning the polynomial speedup alluded to by the term ``Fast'' into an exponential (and sometimes super-exponential) one. As a result, an efficient QFT is known to exist for every group that admits an FFT.\footnote{See the PennyLane tutorial \cite{MariaSchuld2025} for an intuitive explanation.}

\subsection{QFTs and generative quantum models} \label{sec:qft_generative_qml}

Quantum Fourier transforms allow us to transform the \textit{amplitudes} of quantum states. But how does this relate to the spectrum of a machine learning model derived from the state? 

Let us explain this for the example of \textit{quantum circuit Born machines} \cite{liu2018differentiable}, which prepare a (in general, trainable) quantum state $\ket{\psi_{\theta}}$ and define an implicit probabilistic model \cite{mohamed2016learning} as the measurement probability in the computational basis,
\begin{align}
    p_{\theta}(x) = |\langle x | \psi_{\theta}\rangle|^2.
\end{align}
According to the discussion above, the bitstring $x \in \{0,1\}^n$ is interpreted as a group element $g\in \mathbb{Z}_2^n$.

For Abelian groups, the relation between the Fourier coefficients of the model and those of the amplitudes is given by a simple formula:
\begin{theorem}
Let $p(x)=|\langle x | \psi_{\theta}\rangle|^2$ be the measurement distribution of a (trainable) quantum state, with Fourier transform $\hat{p}(k)$. Let  $\hat{\psi}(k)$ be the Fourier coefficients of the amplitudes $\psi(x) = \langle x | \psi_{\theta}\rangle$. Then 
\begin{align}
    \hat{p}(k) = \frac{1}{\sqrt{2^n}} \sum_{s }\hat{\psi}(s) \hat{\psi}^* (s + k)   
\end{align}
\end{theorem}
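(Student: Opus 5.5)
The plan is to compute $\hat{p}(k)$ directly from the definition of the Fourier transform over $\mathbb{Z}_2^n$. The computation uses three facts about this group: every character is real, $(-1)^{x\cdot k}$, and self-inverse; characters multiply pointwise, $\chi_s(x)\chi_t(x)=\chi_{s+t}(x)$; and characters are orthogonal. The statement is the convolution theorem from the section on convolution, applied in reverse: $p=\psi\cdot\psi^*$ is a pointwise product in direct space, so its Fourier transform should be a convolution (here an autocorrelation) in Fourier space. Rather than quote that theorem, I would redo the short computation explicitly, so that the normalisation $1/\sqrt{2^n}$ comes out correctly under the balanced convention used in the paper.

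First, I write
\begin{align}
\hat{p}(k)=\frac{1}{\sqrt{2^n}}\sum_x \psi(x)\psi^*(x)(-1)^{k\cdot x}.
\end{align}
Second, I substitute the inverse transforms. These are $\psi(x)=\frac{1}{\sqrt{2^n}}\sum_s\hat\psi(s)(-1)^{s\cdot x}$ and $\psi^*(x)=\frac{1}{\sqrt{2^n}}\sum_t\hat\psi^*(t)(-1)^{t\cdot x}$, where conjugation leaves the character untouched because it is real. Third, I exchange the finite sums and collect the $x$-dependence into a single character, $(-1)^{(s+t+k)\cdot x}$. Fourth, I apply the orthogonality relation
\begin{align}
\sum_x(-1)^{m\cdot x}=2^n\,\delta_{m,0},
\end{align}
which follows by factorising over the $n$ bits. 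This forces $t=s+k$, using that $-k=k$ in $\mathbb{Z}_2^n$.

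The prefactors combine as $\frac{1}{\sqrt{2^n}}\cdot\frac{1}{2^n}\cdot 2^n=\frac{1}{\sqrt{2^n}}$. What remains is $\frac{1}{\sqrt{2^n}}\sum_s\hat\psi(s)\hat\psi^*(s+k)$, which is the claim.

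None of these steps is hard. The main point needing care is bookkeeping of conventions: which transform carries the conjugate character, and the fact that $s+k$ rather than $s-k$ appears only because every element of $\mathbb{Z}_2^n$ is its own inverse. I would add a remark that for a general finite Abelian group the same argument gives
\begin{align}
\hat{p}(k)=\frac{1}{\sqrt{|G|}}\sum_s\hat\psi(s)\hat\psi^*(s-k),
\end{align}
up to the sign convention for which transform carries $\chi^*$. This confirms that the formula in the statement is the specialisation to bitstrings.
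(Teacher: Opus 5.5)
Your proposal is correct and follows essentially the same route as the paper: substitute the inverse Fourier expansions of $\psi$ and $\psi^*$ into the Born rule, combine characters pointwise, and collapse the $x$-sum with character orthogonality. The only difference is ordering. The paper runs the computation with general characters, arrives at $\hat{\psi}^*(s-k)$, and then notes that $s-k=s+k$ in $\mathbb{Z}_2^n$. You instead specialise to the real characters $(-1)^{x\cdot k}$ from the start and recover the general Abelian form as a closing remark.
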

Note that we did not specify the Abelian group of the Fourier transform, as the following proof works whether we, for example, interpret the bitstrings $x \in \{0,1\}^n$ as elements from $\mathbb{Z}_2^n$ (i.e., as binary features) or as elements from $\mathbb{Z}_d^N$ (i.e., integer-valued, or coarse-grained continuous-valued features).
\begin{proof}
The proof is straight forward; we express the amplitudes in the Born rule by their Fourier decomposition and use two well-known properties of characters, $\chi_k(x)\chi_s(x) = \chi_{k + s}(x) $ and the character orthogonality theorem $\sum_x \chi_{k}(x) \chi^*_s(x) = |G| \delta_{s, k} $:
\begin{align}
    \hat{p}(k) &= \frac{1}{\sqrt{2^n}} \sum_x |\psi (x)|^2 \chi_k(x)\\
    &= \frac{1}{2^n} \frac{1}{\sqrt{2^n}} \sum_x \sum_{s} \hat{\psi}(s) \chi_{s}(x) \sum_{t} \hat{\psi}^*(t) \chi^*_{t}(x)  \chi_k(x)\\
    & =  \frac{1}{2^n} \frac{1}{\sqrt{2^n}} \sum_{s t} \hat{\psi}(s) \hat{\psi}^*(t) \sum_x \chi^*_{s - t}(x) \chi_{k}(x)\\
    &=  \frac{1}{\sqrt{2^n}} \sum_{s} \hat{\psi}(s) \hat{\psi}^* (s-k). 
\end{align}
\end{proof}
An analogous statement for non-Abelian groups can be found in~\cite{belis2026}. Note that for $Z_2^n$, $s-k = s+k$. 

We want to briefly mention another example of how QFTs could help with the design of quantum models. One of the most important use cases for QFTs is to solve \textit{hidden subgroup problems} \cite{Childs_2010}, of which Shor's algorithm is the most well-known. While hidden subgroup problems seem rather artificial for a discipline as applied as machine learning, recently it was shown that it can be used to find partitions of qubits into unentangled subsets \cite{bouland2024state}, an algorithm which can elegantly be translated into heuristics that gives us access to the (un-)entanglement structure of a quantum state \cite{simidzija2026approximate}. If such a quantum state represents a generative machine learning model, quantum algorithms for hidden subgroup problems based on the QFT could provide unique tools to query and manipulate the independence structure of variables represented by the qubits. Another avenue in this direction is the exploration of learning hidden subgroups from data \cite{wakeham2024inference} (rather than computing it from an oracle), although it is yet unclear how this  problem relates to real-world applications.

\subsection{Spectral methods beyond the QFT}\label{sec:fourier_qnns}

The situation of spectral biases looks very different for ``quantum neural networks'' \cite{farhi2018classification, schuld2019evaluating}. Here the computational basis that the QFT acts on is not associated with the data any more, which is in general continuous-valued. Instead, the data is typically encoded into the state using gates $e^{i x_i H}$ that encode the elements of the vector via an evolution of the Hermitian operator $H$. Note that as $e^{i x_i H} = e^{i (x_i+2\pi) H}$, the data domain is effectively the group $\mathbb{R}/\mathbb{Z}$. The model function $f(x)$ is defined as the expectation of some observable $O$ with respect to the data-dependent state $\ket{\psi_x}$:
\begin{align}
    f(x) = \langle \psi_x | O | \psi_x \rangle.
\end{align}
As a result, the quantum Fourier transform is not a Fourier transform on the data space any more. However, the quantum model still has special properties with respect to the ``classical'' Fourier transform over $\mathbb{R}/\mathbb{Z}$, as we will explain in  Section~\ref{sec:qnn_bias}. This makes quantum neural networks candidates to design spectral biases \textit{without} applying a quantum Fourier transform, simply by the way that data is encoded.

\section{Spectral methods in QML}\label{sec:spectral_qml}

Following the above collection of arguments for why spectral methods are crucial for machine learning and natural for quantum computers, we want to conclude by reviewing a few existing areas in quantum machine learning research that already recognise the potential of spectral methods. The material collected in the previous sections suggests that these compile just the beginning of a much deeper connection that is waiting to be uncovered.

\subsection{The spectral bias of QNNs}\label{sec:qnn_bias}

Quantum Neural Networks---also referred to as Variational Quantum Circuits---are parametrised quantum circuits used as supervised machine learning models \cite{farhi2018classification, schuld2020circuit}. A subset of circuit parameters is used to encode an input $x \in \mathbb{R}^N$, while the remaining parameters $\theta$ are trained by gradient descent methods, typically using parameter-shift rules \cite{mitarai2018quantum, schuld2019evaluating}. An expected observable, such as the average measurement result for a designated qubit, is interpreted as the value $f_{\theta}(x)$ of the model. A growing body of literature argues that Quantum Neural Networks have a spectral bias \cite{schuld2021effect, lu2026unified, duffy2025spectral, xu2025spectral} which can be manipulated for specific learning tasks \cite{jaderberg2024let}. This includes a ``hard'' spectral bias stemming from the embedding of classical data and a potential ``soft'' spectral bias that regularises the underlying model class \cite{schuld2021effect}, as well as a possible spectral bias with respect to the learning dynamics similar to the one observed in classical neural networks \cite{lu2026unified}. 

The ``hard'' spectral bias goes back to \citet{schuld2021effect} which showed that Quantum Neural Networks can be expressed by a truncated Fourier series. A feature $x \in \mathbb{R}$ of the input vector (as well as the trainable parameters) are encoded via gates of the form $e^{i x H}$, where $H$ is a Hermitian operator that is often chosen as a single-qubit Pauli operator, which makes the gate a $X, Y$, or $Z$ rotation. Assuming that $H$ has integer-valued eigenvalues, we have $e^{i x H} = e^{i (x + 2 \pi) H}$, which means that the model $f_{\theta}(x)$ is $2 \pi$-periodic. Alternatively, the data can be thought of as taken from the $N$-dimensional unit circle, or the group $\mathbb{R}/\mathbb{Z}$. The natural Fourier decomposition is therefore the Fourier series 
\begin{align}
    f_{\theta}(x) = \frac{1}{\sqrt{2\pi}}\sum_{k \in \Omega} \hat{f}_{\theta}(k) e^{-2 \pi i x \cdot k}  
\end{align}
where the elements of $k$ can take any integer value ($k = {-\infty, \dots, \infty})$ and $x \cdot k$ is the standard inner product of two vectors. It turns out \cite{schuld2021effect} that the eigenvalues $\{\lambda_1, \dots, \lambda_d \}$ of $H$ put limitations on $\Omega$, and therefore bandlimit the Fourier spectrum: the spectrum only contains frequencies 
\begin{align} 
  \Omega = \{k | \Lambda_{\mathbf{i}} - \Lambda_{\mathbf{j}}= k\}.
\end{align} 
composed of combinations of eigenvalues as 
\begin{align}
    \Lambda_{\mathbf{i}} &= \left(\lambda_{i_1} + \dots + \lambda_{i_d}\right)\\
    \Lambda_{\mathbf{j}} &= \left(\lambda_{j_1} + \dots + \lambda_{j_d}\right),
\end{align}
where $\mathbf{i} = (i_1, ...,i_d)$, $\mathbf{j} = (j_1, ...,j_d)$ and $i_l, j_l = 1,...d$.
In other words: typical Quantum Neural Network architectures are a bandlimited function class. 

Within the allowed frequency band, Quantum Neural Networks  can also have a soft spectral bias. Essentially, this can be seen by considering the combinatorial expression linking the Fourier coefficients to parameters $a_{\mathbf{j}, \mathbf{k}}$ that depend on the choice of gates and measurements in the circuit,
\begin{align}
    \hat{f}_{\theta}(k) = \sum_{\mathbf{i}, \mathbf{j}| \Lambda_{\mathbf{i}} - \Lambda_{\mathbf{j}} = k} a_{\mathbf{j}, \mathbf{k}}.
\end{align}
The smaller $k$, the more combinations of eigenvalues $\Lambda_{\mathbf{i}} - \Lambda_{\mathbf{j}}$ add up to $k$, and the more terms in the sum. This has led to studies exploring how the embedding gates' eigenvalue spectrum shapes the Fourier spectrum of Quantum Neural Networks \cite{mhiri2025constrained}, and to claims of a spectral bias based on redundancy \cite{duffy2025spectral}.  

In addition, there is an ongoing debate on whether quantum neural networks have a spectral bias in their learning dynamics, similar to classical neural networks. \citet{lu2026unified} provide a general framework of conditions for which the high-order part of the training error is guaranteed to be larger than the lower-order part during gradient descent, and argue that these conditions are fulfilled by both classical neural networks and quantum neural networks, even if data encoding is not facilitated by gates of the form $e^{i x_i H}$. An opposing view is provided by \cite{xu2025spectral}, whose analysis of the neural tangent kernel of quantum neural networks does not show a frequency-wise decoupling of the training error in Fourier space. 

An interesting aspect of the spectral bias debate is that a bandlimited or decaying Fourier spectrum can make classical simulation of quantum neural networks feasible, which can both boost and harm the claim that they require quantum hardware to be implemented. On the one hand, for Quantum Neural Networks to benefit from quantum hardware (and justify investments into this technology), they need to be difficult to classically simulate. On the other hand, as first-generation quantum computers will likely be small and slow, it is attractive to outsource as much as possible to classical simulation. For example, \cite{schreiber2023classical} shows how quantum neural networks need to be trained on quantum computers due to the exponential growth of trainable Fourier coefficients, but could then be deployed by ``classical surrogate models'' that implement the trained truncated Fourier series on classical hardware. \citet{li2025dual} argue that a certain type of bandlimited simulation, called \textit{Pauli Path Propagation} \cite{rudolph2025pauli}, can be used to warm-start training on quantum computers (we will discuss below how the Pauli Path Propagation algorithm can be understood to be working in a generalized Fourier basis).

Ultimately, only empirical experiments will determine whether the region of Fourier space that is classically intractable---yet quantumly accessible---plays a critical role in these models. However, we consider this highly likely, given the established importance of learning intermediate frequencies for generalizing beyond mere interpolation in deep learning.

\subsection{Probabilistic modeling over the symmetric group} \label{sec:symmetric_modelling}

In Section~\ref{sec:interpretation_sn} we motivated that the Fourier spectrum of a fundamental, but non-trivial group, the symmetric group, contains important statistical information for the design of machine learning models for permutation data. For probabilistic models, Fourier coefficients correspond to expectations of the irreducible representations of the group, and low-order Fourier coefficients reflect simple correlation patterns such as \textit{object $i$ is in position $j$}, while higher-order coefficients capture high-order correlations of the type \textit{objects $i,j,k$ are in positions $l,m,n$ while object $r$ is in position $q$}. This property has been exploited by Risi Kondor \cite{kondor2011non} and Jonathan Huang \cite{huang2009fourier} to build probabilistic models over permutations, and subsequently been translated to a quantum algorithm \cite{belis2026}. 

The basic idea in both classical and quantum proposals is to create a model in a recurrent, Markov-chain type structure, where starting with a canonical assignment between objects and positions, we iterate over \textit{diffusion} and \textit{conditioning} steps. Diffusion is a process that captures the chance of swapping the positions of two elements (i.e., a transposition) as time goes by, and can be seen as a random walk over the Cayley graph of permutations, whose edges connect permutations that are only a transposition apart. This process
``smears out'' the probabilities and creates uncertainty about object-position assignments. Conditioning, instead, \textit{adds} information to the model in the form of a Bayesian update. Essentially, one multiplies the model with a likelihood function that reflects information from making an observation such as \textit{item $j$ is in position $i$}, after which permutations with this pattern become more likely in the model. While this framework is a little different from the standard training of a neural network, it is a machine learning method similar to Kalman filtering, which is widely used in navigation and control \cite{auger2013industrial}. 

Importantly, the two steps of uncertainty generation and elimination are elegantly interpreted from the perspective of harmonic analysis. Diffusion is a convolution $p(\pi) * q$, in other words an equivariant map \cite{kondor2018generalization}, between the present model and a ``kernel'' such as 
\begin{equation}
    q(\pi)=
    \begin{cases}
        p\;&\pi=e\\
        (1-p)/\binom{n}{2}\; &\pi \text{ is a transposition}\\
        0\;& \text{otherwise}.
    \end{cases}
    \label{eq:diffusion_kernel}
\end{equation}
According to the convolution theorem, the convolution is a simple point-wise product in Fourier space. Furthermore, the above kernel is a \textit{class function}, which means according to Schur's lemma that the Fourier coefficient has the even simpler form of $\hat{q}_{\sigma} = c_{\sigma} \mathbb{I}$. This means that implementing diffusion is particularly elegant in Fourier space. Conditioning or a Bayesian update, in turn, is a product in direct space but a convolution in Fourier space (which mixes frequencies to some extent). 

The super-exponential $n!$ growth of the symmetric group, and consequently the scaling of the Fast Fourier Transform, makes probabilistic modeling over permutations, including the strategy above, challenging. To circumvent this, work by \citet{kondor2011non} and \citet{huang2009fourier} proposed operating in a bandlimited Fourier space, which entails retaining only the low-order Fourier coefficients. Such models can represent simple correlations in the data, and solve specific inference tasks that, likewise, only require these low-order Fourier coefficients. However, even with advanced implementations and severe bandlimiting, these simple models can only realistically deal with $n<30$ items \cite{kondor2011non}. In \citet{belis2026} the authors (including some of us) therefore set out to ask whether quantum computers could lift these restrictions. They show that, in principle, the Markov model after $t$ time steps can be implemented efficiently on a quantum computer using amplitude encoding, and provide an algorithm to prepare a state proportional to
\begin{align}
    \sum_{\pi \in \mathbb{S}_n} p^{(t)}(\pi) \ket{\pi}.
\end{align}
This approach requires that only a few conditioning steps are involved, and that the observations have non-negligible support in the current model at any given step. While this encodes the probabilistic model into the amplitudes of the quantum state, a similar process could prepare a model $\tilde{p}$ in ``Born encoding'' 
\begin{align}
    \sum_{\pi \in \mathbb{S}_n} \sqrt{\tilde{p}(\pi)} \ket{\pi}.
\end{align}
The model state could be used to sample from the distributions $|p(\pi)|^2$ or $\tilde{p}(\pi)$ to generate new data. Additional techniques can lead to algorithms that sample the most probable permutations by preparing a state with amplitudes proportional to $|p(\pi)|^m$, or marginals thereof---a task that is highly challenging classically.

While many non-trivial open questions remain, such as the feasibility of implementation on realistic hardware, or the performance of the model on real-world data compared to methods not based on harmonic analysis, this study opens up the prospect that quantum computers may be able to unlock machine learning with permutation-structured data. Such data is surprisingly common in rankings for recommendation systems and information management, or in identity management and tracking tasks.

\subsection{Resource theories and Fourier analysis} \label{sec:resource}

Lately, a form of group spectral methods have been suggested to study the \textit{resourcefulness} of quantum states \cite{bermejo2025characterizing} (see also the PennyLane demo \cite{PaoloBraccia2025}). Resource theories in quantum information ask how ``complex'' a given quantum state is with respect to a certain measure of complexity, which often translates into how difficult they are to prepare in the lab, or how difficult they are to simulate on a classical computer. Examples of resources are \textit{entanglement}, \textit{Clifford stabilizerness} and \textit{Gaussianity}. As it turns out, a very useful ``resource fingerprint'' of a quantum state is a generalised version of its \textit{power spectrum} (i.e., the absolute square values of the Fourier coefficients). ``Generalised'' here refers to the fact that we invoke the mechanism of Fourier analysis described in Section \ref{sec:fourier_quantum}, but instead of a projection of a function onto the orthogonal basis of all characters or irreps, we project only onto a few of them. Technically speaking, this means that we relax the condition that a quantum state is associated with the regular representation, which means that a computational basis state relates to each group element, as introduced in Section~\ref{sec:fourier_quantum}. Instead, we can associate the vector space of quantum states with \textit{any} representation, which defines what we consider to be a resource. Intuitively, this generalises the notion of ``smoothness'' inherent in the Fourier spectrum to other resources, and potentially opens up recipes for generalised regularisation of quantum machine learning models. 

The recipe to compute the group generalised power spectrum (or ``GFD Purities'' \cite{bermejo2025characterizing}) goes as follows:
\begin{itemize}
    \item Identify the set of ``free states'' under the resource, such as the set of entanglement-free product states.
    \item Identify a unitary representation $R:G\to GL(V)$ of some group $G$ that maps free states to free states. Here, $V$ is the vector space that describes quantum states. For example, if $V = \mathcal{L}(\mathcal{H})$ is the space of density matrices $\rho$, a representation may be given by the adjoint representation  $R: g \to R(g) \rho R(g)^{\dagger}$ of unitary operators $R(g)$ that do not entangle qubits.
    \item Find the basis change that block-diagonalises the representation to decompose it into irreps. These reveal the invariant subspaces of $V$. 
    \item Given a new state, applying the basis change will reveal the GFD Purities. 
\end{itemize}
Of course, if the representation is chosen as the regular representation and $V$ is the Hilbert space of Dirac vectors, the Purities are simply the absolute squares of the Fourier coefficients of a quantum state, and the block-diagonalisation is the Quantum Fourier Transform. For other representations, we might need other transforms (such as the quantum Schur transform \cite{burchardt2025high}), but the Purities have a very similar spectral interpretation.

When we take the resource to be entanglement, the invariant subspaces that correspond to different Purities are spanned by ``constant-order'' Pauli basis vectors, which are those that apply identities to a constant number of qubits. For example, for two qubits, the invariant subspaces that the Purities project into are spanned by $\{\mathbb{I} \times \mathbb{I}\}$, $\{\mathbb{I} \times \mathbb{X}, \mathbb{I} \times \mathbb{Y}, \mathbb{I} \times \mathbb{Z}\}$, $\{\mathbb{X} \times \mathbb{I}, \mathbb{Y} \times \mathbb{I}, \mathbb{Z} \times \mathbb{I}\}$, and $\{\mathbb{X} \times \mathbb{X},\mathbb{X} \times \mathbb{Y}, \dots, \mathbb{Z} \times \mathbb{Z}\}$. This shows how the invariant subspaces refer to subsets of qubits (similar to the frequencies of the Walsh transform). Quantum states with ``bandlimited'' GFD Purities are those that have limited entanglement, as they only have non-zero projections into subspaces that have non-trivial support on few qubits. Circuits that stay in these bandlimited subspaces are exactly the ones that the popular technique \textit{Pauli Path Propagation} simulates \cite{rudolph2025pauli, li2025dual}. Again, the Fourier spectrum, even if a much generalised notion of it, uncovers ``simplicity''. 

We want to note in passing that there is an intriguing relation between resource theories and \textit{quantum phase spaces}, which represent a quantum state as a function $f:G \to \mathbb{C}$ from a Reproducing Kernel Hilbert Space. It can be shown that canonical choices of the phase space implement a spectral filter: they emphasise the information of the quantum state in some irreps and suppress the information from others \cite{coffman2026group}. This opens up the prospect of using quantum phase spaces as regularised function spaces to learn about the properties of quantum states from measurements.

\section{Outlook}
In this article, we argued that spectral methods are a compelling candidate to answer the question of why quantum computers could help to design the next generation of machine learning methods, as first asked a decade ago \cite{wittek2014quantum}. After some scrutiny, spectral methods appear rather central to learning: The Fourier spectrum of a model contains information on properties such as smoothness and robustness, and is therefore the prime target for past and present regularisation strategies, which are at the heart of generalisation. Fourier methods, in particular their group-theoretic generalisations, are also the very foundation of quantum computing as they underlie quantum mechanics as a whole \cite{woit2017quantum}. The Quantum Fourier Transform, in particular, is a powerful subroutine to access---and hence manipulate---the Fourier spectrum or other algebraic properties \cite{bouland2024state} of quantum states that represent machine learning models. Can quantum computers unlock more ``direct'' spectral regularisation techniques? Will they enable the design of desirable biases without reverting to the unsustainable scales of heavily overparametrised neural networks? Are they the solution to yet unsuccessful attempts of training kernel methods? 

We hope that this collection of material helps to stimulate a new research direction in quantum machine learning that tries to find answers to these questions. To do so successfully requires us to fundamentally shift our perspective: from \textit{what is a model class with provable quantum speedups} to \textit{how can we use the strengths of a quantum computer to build good models?}  Moving towards the latter requires stronger roots in machine learning research, because it demands a good understanding for what makes a ``good'' model---in particular as we do not currently have access to large-scale benchmarking that allows for a purely empirical validation. Luckily, recent years of machine learning research have provided many answers to this question, which---perhaps unsurprisingly---have led us back to the principles of traditional learning theory \cite{wilson2025deep, belkin2019reconciling, bartlett2020benign}, rather than uprooting them \cite{zhang2016understanding}. What emerges is a recalibrated understanding of the interplay between flexibility and simplicity: Good models have \textit{soft simplicity biases}; they can learn anything, but have a preference towards simple functions. At the same time, to handle modern machine learning tasks, good models have to be computationally extremely efficient on the available hardware \cite{hooker2021hardware}. Neural networks often fulfill these properties, and are by now not so much a distinct model class, but the workhorse whenever a machine learning model requires a parametrised function whose gradients are quick to compute. However, neural networks come with an important caveat: somewhat paradoxically, they achieve simplicity through large architectures and big data inputs. This makes them not only resource intensive, but also unsuitable for small problems, where engineering features and biases is still crucial. One way to innovate machine learning is therefore to take the powerful mechanisms of neural networks and ``reproduce'' them at smaller scales. Spectral methods may offer a domain-independent blueprint for this, in particular if we take the structure of data more seriously, as evident in geometric deep learning \cite{bronstein2021geometric}.

Showing that quantum computers could implement soft simplicity biases while being computationally efficient is no small task. Quantum computers offer a very particular access to information via measurement. For example, generative quantum models can fundamentally not estimate likelihoods of the sampling distribution, whereas classical generative models (besides GANs) can be understood as a collection of ingenious, technical, and non-obvious tricks to do exactly that \cite{murphy2022probabilistic}. Quantum Fourier Transforms do not enable us to \textit{directly compute} the Fourier coefficients of a quantum state, we can only manipulate them within the limits of quantum algorithmic tools, or sample from the spectrum. And, while a fundamental building block for even fault-tolerant quantum machine learning algorithms, training parametrised circuits is by no means the silver bullet that neural networks represent: it is slow and expensive, and will likely require tricks to ``train classically, deploy quantum'' \cite{recio2025train}. 

The way towards performant spectral quantum machine learning models will be to understand and overcome these challenges, not by copying what has worked in classical learning, but by combining highly specific, theoretically well-motivated tools that are tailor-made for quantum computers. Good quantum machine learning models will also likely be dequantisable if the problems are simple enough, as they possess a lot of structure---which, we firmly believe, bypasses most of the current arguments regarding barren plateaus~\cite{larocca2025barren}, as those generally rely on Haar-random ensembles. However, even deep learning models can, once trained, be replaced by much simpler (compressed) architectures; still, generalisation relies on the flexibility of large scales~\cite{frankle2018lottery,wilson2025deep}. In the language of spectral biases: we expect that performance strongly depends on what happens in the medium-order region of the Fourier spectrum---in other words, at the edge of classical simulatability. If quantum models are performant, they \textit{should} lead to great classical models as well, which, once widely deployed, will eventually demand quantum hardware to push performance even further. We therefore argue that the current preoccupation with dequantisation in quantum machine learning can be an obstacle, rather than a compass, for true progress.   

Finally, we want to remark that while we have focused on Fourier analysis, the Fourier basis is only one of many representations that quantum computers can efficiently manipulate. There are promising suggestions for efficient quantum Schur \cite{burchardt2025high, krovi2019efficient, bacon2006efficient}, Chebyshev \cite{williams2023quantum}, Paldus \cite{burkat2025quantum}, wavelet \cite{fijany1998quantum}, Laplace \cite{shehata2020general}, and Hilbert \cite{jha2025quantum} transforms. In this spirit, quantum computers can be understood as physical samplers from high-dimensional distributions, capable of moving seamlessly into complex bases to extract features from data. It is therefore difficult to imagine that such machines will \textit{not} open up entirely new ways to learn.

\section{Acknowledgments}
We are grateful to Jason Pye, Juan Miguel Arrazola, Derek K. Wise, Gordon Brown, and Nicholas Shorter for useful discussions and feedback on the manuscript. No Large Language Model was used to generate contents of this manuscript.

\bibliography{references.bib}

\end{document}